  \definecolor{orange}{HTML}{ff7f0e}
  \definecolor{blue}{HTML}{1f77b4}
\def\@onedot{\ifx\@let@token.\else.\null\fi\xspace}
\DeclareRobustCommand\onedot{\futurelet\@let@token\@onedot}
\definecolor{blue1}{RGB}{0,128,255}
\definecolor{blue3}{RGB}{0,0,128}
\definecolor{darkpastelgreen}{rgb}{0.01, 0.75, 0.24}
\definecolor{cerulean}{rgb}{0.0, 0.48, 0.65}
\newcommand*{\tran}{^{\mkern-1.5mu\mathsf{T}}}
\newcommand{\mbb}[1]{\mathbb{#1}}
\newcommand{\mcal}[1]{\mathcal{#1}}
\def\eg{\emph{e.g}\onedot}
\def\ie{\emph{i.e}\onedot}
\def\vs{\emph{vs}\onedot}
\def\resp{resp\onedot}
\def\aka{a.k.a\onedot}
\definecolor{darkgreen}{rgb}{0,0.6,0}
\newtheorem{lemma}{Lemma}
\newcommand{\figleft}{{\em (Left)}}
\newcommand{\figcenter}{{\em (Center)}}
\newcommand{\figright}{{\em (Right)}}
\def\eqref#1{equation~\ref{#1}}
\def\1{\bm{1}}
\def\rvw{{\mathbf{w}}}
\def\rvx{{\mathbf{x}}}
\def\rvy{{\mathbf{y}}}
\def\rvz{{\mathbf{z}}}
\def\vtheta{{\bm{\theta}}}
\def\va{{\bm{a}}}
\def\vb{{\bm{b}}}
\def\vh{{\bm{h}}}
\def\vk{{\bm{k}}}
\def\vs{{\bm{s}}}
\def\vu{{\bm{u}}}
\def\vx{{\bm{x}}}
\def\vz{{\bm{z}}}
\def\mA{{\bm{A}}}
\def\mI{{\bm{I}}}
\def\mT{{\bm{T}}}
\DeclareMathAlphabet{\mathsfit}{\encodingdefault}{\sfdefault}{m}{sl}
\SetMathAlphabet{\mathsfit}{bold}{\encodingdefault}{\sfdefault}{bx}{n}
\DeclareMathOperator*{\argmin}{arg\,min}
\newcommand{\ud}{\mathop{}\!\mathrm{d}}
\newcommand{\norm}[1]{\left\lVert#1\right\rVert}
\title{Solving Inverse Problems in Medical Imaging with Score-Based Generative Models}
\author{%
  Yang Song\thanks{Joint first authors.}~~, Liyue Shen$^*$, Lei Xing \& Stefano Ermon\\
  Stanford University\\
  \texttt{\{yangsong@cs,liyues@,lei@,ermon@cs\}.stanford.edu}
}
\begin{document}

\maketitle
\begin{abstract}

Reconstructing medical images from partial measurements is an important inverse problem in Computed Tomography (CT) and Magnetic Resonance Imaging (MRI). Existing solutions based on machine learning typically train a model to directly map measurements to medical images, leveraging a training dataset of paired images and measurements. These measurements are typically synthesized from images using a fixed physical model of the measurement process, which hinders the generalization capability of models to unknown measurement processes. To address this issue, we propose a fully unsupervised technique for inverse problem solving, leveraging the recently introduced score-based generative models. Specifically, we first train a score-based generative model on medical images to capture their prior distribution. Given measurements and a physical model of the measurement process at test time, we introduce a sampling method to reconstruct an image consistent with both the prior and the observed measurements. Our method does not assume a fixed measurement process during training, and can thus be flexibly adapted to different measurement processes at test time. Empirically, we observe comparable or better performance to supervised learning techniques in several medical imaging tasks in CT and MRI, while demonstrating significantly better generalization to unknown measurement processes.
\end{abstract}
\section{Introduction}
Computed Tomography (CT) and Magnetic Resonance Imaging (MRI) are commonly used imaging tools for medical diagnosis. Reconstructing CT and MRI images from raw measurements (sinograms for CT and k-spaces for MRI) are well-known inverse problems. Specifically, measurements in CT are given by X-ray projections of an object from various directions, and measurements in MRI are obtained by inspecting the Fourier spectrum of an object with magnetic fields. However, since obtaining the full sinogram for CT causes excessive ionizing radiation for patients, and measuring the full k-space of MRI is very time-consuming, it has become important to reduce the number of measurements in CT and MRI. In many cases, only partial measurements, such as sparse-view sinograms and downsampled k-spaces, are available. Due to this loss of information, the inverse problems in CT and MRI are often ill-posed, making image reconstruction especially challenging.

With the rise of machine learning, many methods~\citep{zhu2018automap, mardani2017deep,shen2019patrecon, wurfl2018deepct, ghani2018cgan, wei2020twostep} have been proposed for medical image reconstruction using a small number of measurements. Most of these methods are supervised learning techniques. They learn to directly map partial measurements to medical images, by training on a large dataset comprising pairs of CT/MRI images and measurements. These measurements need to be synthesized from medical images with a fixed physical model of the measurement process. However, when the measurement process changes, such as using a different number of CT projections or different downsampling ratio of MRI k-spaces, we have to re-collect the paired dataset with the new measurement process and re-train the model. This prevents models from generalizing effectively to new measurement processes, leading to counter-intuitive instabilities such as more measurements causing worse performance~\citep{antun2020instabilities}.

In this work, we sidestep this difficulty completely by proposing unsupervised methods that do not require a paired dataset for training, and therefore are not restricted to a fixed measurement process. Our main idea is to learn the prior distribution of medical images with a generative model in order to infer the lost information due to partial measurements. Specifically, we propose to train a score-based generative model~\citep{song2019generative,song2020improved,song2021scorebased} on medical images as the data prior, due to its strong performance in image generation~\citep{ho2020denoising,dhariwal2021diffusion}. Given a trained score-based generative model, we provide a family of sampling algorithms to create image samples that are consistent with the observed measurements and the estimated data prior, leveraging the physical measurement process. Once our model is trained, it can be used to solve any inverse problem within the same image domain, as long as the mapping from images to measurements is linear, which holds for a large number of medical imaging applications.

We evaluate the performance of our method on several tasks in CT and MRI. Empirically, we observe comparable or better performance compared to supervised learning counterparts, even when evaluated with the same measurement process in their training. In addition, we are able to uniformly surpass all baselines when changing the number of measurements, \eg, using a different number of projections in sparse-view CT or changing the k-space downsampling ratio in undersampled MRI. Moreover, we show that by plugging in a different measurement process, we can use a single model to perform both sparse-view CT reconstruction and metal artifact removal for CT imaging with metallic implants. To the best of our knowledge, this is the first time that generative models are reported successful on clinical CT data. Collectively, these empirical results indicate that our method is a competitive alternative to supervised techniques in medical image reconstruction and artifact removal, and has the potential to be a universal tool for solving many inverse problems within the same image domain.

\section{Background}\label{sec:background}

\subsection{Linear inverse problems}
An inverse problem seeks to recover an unknown signal from a set of observed measurements. Specifically, suppose $\rvx \in \mbb{R}^n$ is an unknown signal, and $\rvy \in \mbb{R}^m = \mA \rvx + \bm{\epsilon}$ is a noisy observation given by $m$ linear measurements, where the measurement acquisition process is represented by a linear operator $\mA \in \mbb{R}^{m\times n}$, and $\bm{\epsilon} \in \mbb{R}^n$ represents a noise vector. Solving a linear inverse problem amounts to recovering the signal $\rvx$ from its measurement $\rvy$. Without further assumptions, the problem is ill-defined when $m < n$, so we additionally assume that $\rvx$ is sampled from a prior distribution $p(\rvx)$. In this probabilistic formulation, the measurement and signal are connected through a measurement distribution $p(\rvy \mid \rvx) = q_{\bm{\epsilon}}(\rvy - \mA \rvx)$, where $q_{\bm{\epsilon}}$ denotes the noise distribution of $\bm{\epsilon}$. 
Given $p(\rvy \mid \rvx)$ and $p(\rvx)$, we can solve the inverse problem by sampling from the posterior distribution $p(\rvx \mid \rvy)$.

Examples of linear inverse problems in medical imaging include image reconstruction for CT and MRI. In both cases, the signal $\rvx$ is a medical image. The measurement $\rvy$ in CT is a sinogram formed by X-ray projections of the image from various angular directions~\citep{buzug2011computed}, while the measurement $\rvy$ in MRI consists of spatial frequencies in the Fourier space of the image (\aka the k-space in the MRI community)~\citep{vlaardingerbroek2013magnetic}.

\subsection{Score-based generative models}\label{sec:sbgm}
When solving inverse problems in medical imaging, we are given an observation $\rvy$, the measurement distribution $p(\rvy \mid \rvx)$ and aim to sample from the posterior distribution $p(\rvx \mid \rvy)$. The prior distribution $p(\rvx)$ is typically unknown, but we can train generative models on a dataset $\{\rvx^{(1)}, \rvx^{(2)}, \cdots, \rvx^{(N)}\} \sim p(\rvx)$ to estimate this prior distribution. Given an estimate of $p(\rvx)$ and the measurement distribution $p(\rvy \mid \rvx)$, the posterior distribution $p(\rvx \mid \rvy)$ can be determined through Bayes' rule.

We propose to estimate the prior distribution of medical images using the recently introduced score-based generative models~\citep{song2019generative,ho2020denoising,song2021scorebased}, whose iterative sampling procedure makes it especially easy for controllable generation conditioned on an observation $\rvy$. Specifically, we adopt the formulation of score-based generative models in \citet{song2021scorebased}, where we leverage a Markovian diffusion process to progressively perturb data to noise, and then smoothly convert noise to samples of the data distribution by estimating and simulating its time reversal. %
We provide an illustration of this generative modeling framework in \cref{fig:sde}.

\begin{SCfigure}
    \centering
    \includegraphics[width=0.7\textwidth]{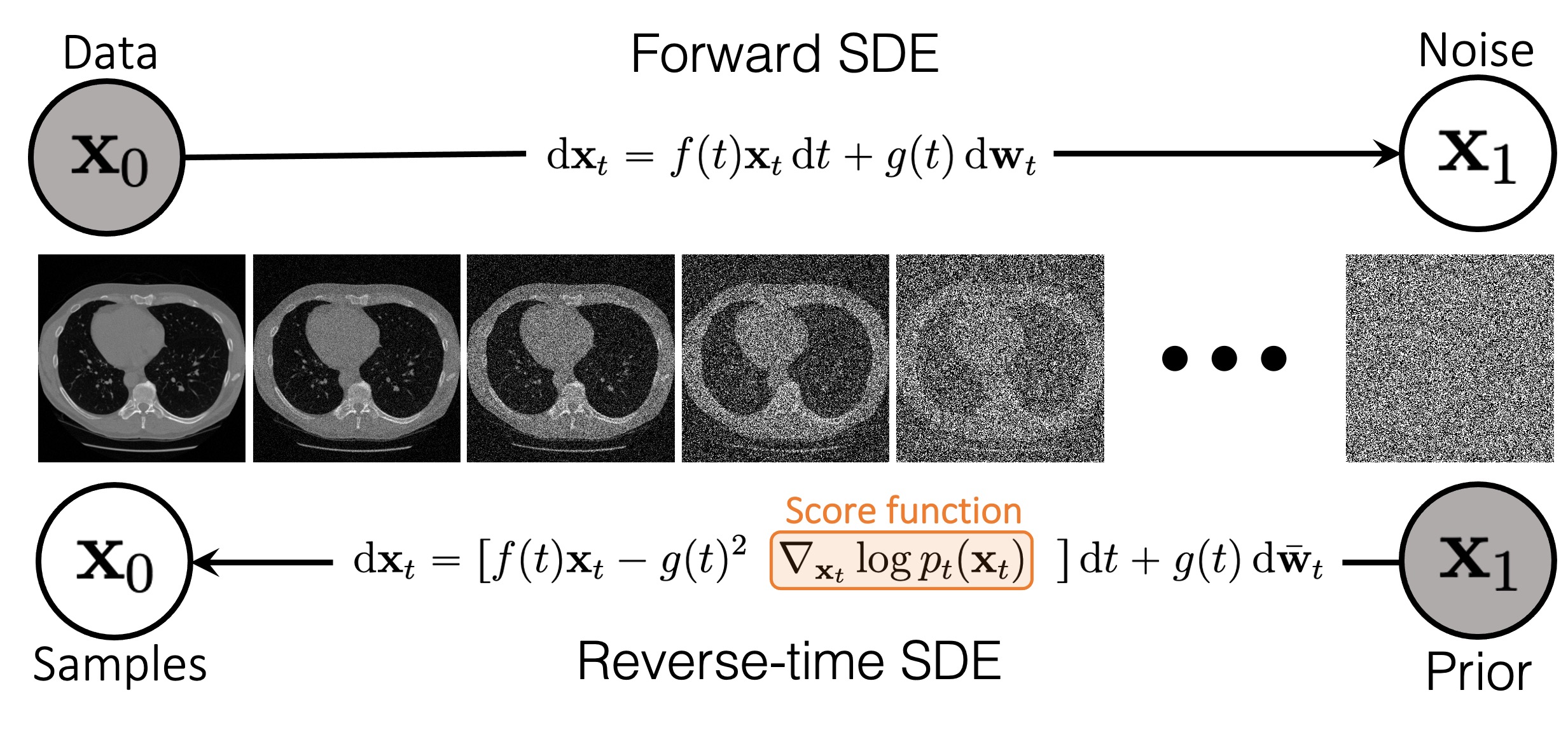}
    \caption{We can smoothly perturb images to noise by following the trajectory of an SDE. By estimating the score function $\nabla_\rvx \log p_t(\rvx)$ with neural networks (called score models), it is possible to approximate the reverse SDE and then solve it to generate image samples from noise.}
    \label{fig:sde}
\end{SCfigure}

\textbf{Perturbation process}~
Suppose the dataset is sampled from an unknown data distribution $p(\rvx)$. We perturb datapoints with a stochastic process over a time horizon $[0, 1]$, governed by a linear stochastic differential equation (SDE) of the following form
\begin{align}
        \ud \rvx_t = f(t)\rvx_t \ud t + g(t) \ud \rvw_t,  \qquad t \in [0, 1],\label{eq:sde}
\end{align}
where $f: [0,1] \to \mbb{R} $, $g: [0,1] \to \mbb{R}$, $\{\rvw_t \in \mbb{R}^n\}_{t\in[0,1]} $ denotes a standard Wiener process (\aka, Brownian motion), and $\{\rvx_t \in \mbb{R}^n \}_{t \in [0, 1]}$ symbolizes the trajectory of random variables in the stochastic process. We further denote the marginal probability distribution of $\rvx_t$ as $p_t(\rvx)$, and the transition distribution from $\rvx_0$ to $\rvx_t$ as $p_{0t}(\rvx_t \mid \rvx_0)$. By definition, we clearly have $p_0(\rvx) \equiv p(\rvx)$. Moreover, the functions $f(t)$ and $g(t)$ are specifically chosen such that for any initial distribution $p_0(\rvx)$, the distribution at the end of the perturbation process, $p_1(\rvx)$, is close to a pre-defined noise distribution $\pi(\rvx)$. In addition, the transition density $p_{0t}(\rvx_t \mid \rvx_0)$ is always a conditional linear Gaussian distribution, taking the form $p_{0t}(\rvx_t \mid \rvx_0) = \mcal{N}(\rvx_t \mid \alpha(t) \rvx_0, \beta^2(t)\mI)$ where $\alpha: [0,1] \to \mbb{R}$ and $\beta: [0,1] \to \mbb{R}$ can be derived analytically from $f(t)$ and $g(t)$~\citep{sarkka2019applied}. Examples of such SDEs include Variance Exploding (VE), Variance Preserving (VP), and subVP SDEs proposed in~\citet{song2021scorebased}. We found VE SDEs performed the best in our experiments.

\textbf{Reverse process}~
By reversing the perturbation process in \cref{eq:sde}, we can start from a noise sample $\rvx_1 \sim p_1(\rvx)$ and gradually remove the noise therein to obtain a data sample $\rvx_0 \sim p_0(\rvx) \equiv p(\rvx)$. Crucially, the time reversal of \cref{eq:sde} is given by the following reverse-time SDE~\citep{song2021scorebased}
\begin{align}
    \ud \rvx_t = \left[ f(t)\rvx_t - g(t)^{2} \nabla_{\rvx_t} \log p_{t}(\rvx_t) \right] \ud t + g(t) \ud \bar{\rvw}_t,\qquad t\in[0,1], \label{eq:reverse_sde}
\end{align}
where $ \{\bar{\rvw}_t\}_{t\in[0, 1]} $ denotes a standard Wiener process in the reverse-time direction, and $\ud t$ represents an infinitesimal negative time step, since the above SDE must be solved backwards from $t=1$ to $t=0$. The quantity $\nabla_{\rvx_t}\log p_{t}(\rvx_t)$ is known as the \emph{score function} of $p_t(\rvx_t)$. By the definition of time reversal, the trajectory of the reverse stochastic process given by \cref{eq:reverse_sde} is $\{\rvx_t\}_{t\in [0,1]}$, same as the one from the forward SDE in \cref{eq:sde}.

\textbf{Sampling}~
Given an initial sample from $p_1(\rvx)$, as well as scores at each intermediate time step, $\nabla_\rvx \log p_t(\rvx)$, we can simulate the reverse-time SDE in \cref{eq:reverse_sde} to obtain samples from the data distribution $p_0(\rvx) \equiv p(\rvx)$. In practice, the initial sample is approximately drawn from $\pi(\rvx)$ since $\pi(\rvx) \approx p_1(\rvx)$, and the scores are estimated by training a neural network $\vs_\vtheta(\rvx, t)$ (named the \emph{score model}) on a dataset $\{\rvx^{(1)}, \rvx^{(2)}, \cdots, \rvx^{(N)}\} \sim p(\rvx)$ with denoising score matching~\citep{vincent2011connection,song2021scorebased}, \ie, solving the following objective
\begin{align*}
    \vtheta^* = \argmin_\vtheta \frac{1}{N}\sum_{i=1}^N \mbb{E}_{t \sim \mcal{U}[0, 1]}\mbb{E}_{\rvx_t^{(i)} \sim p_{0t}(\rvx_t^{(i)} \mid \rvx^{(i)})}\Big[\norm{\vs_{\vtheta}(\rvx^{(i)}_t, t) - \nabla_{\rvx^{(i)}_t} \log p_{0t}(\rvx_t^{(i)} \mid \rvx^{(i)})}_2^2\Big],
\end{align*}
where $\mcal{U}[0,1]$ denotes a uniform distribution over $[0, 1]$. The theory of denoising score matching ensures that $\vs_{\vtheta^*}(\rvx, t) \approx \nabla_\rvx \log p_t(\rvx)$. After training this score model, we plug it into \cref{eq:reverse_sde} and solve the resulting reverse-time SDE
\begin{align}
    \ud \rvx_t = \left[ f(t)\rvx_t - g(t)^{2} \vs_{\vtheta^*}(\rvx_t, t) \right] \ud t + g(t) \ud \bar{\rvw}_t, \qquad t\in[0,1 ], \label{eq:reverse_sde_model}
\end{align}
for sample generation. One sampling method is to use the Euler-Maruyama discretization for solving \cref{eq:reverse_sde_model}, as given in \cref{alg:sampling}. 
Other sampling methods include annealed Langevin dynamics~\citep[ALD,][]{song2019generative}, probability flow ODE solvers~\citep{song2021scorebased}, and Predictor-Corrector samplers~\citep{song2021scorebased}. %

\section{Solving inverse problems with score-based generative models}\label{sec:method}
\setlength{\fboxsep}{1pt}
\newcommand{\colorline}[1]{
\hspace{-0.03\linewidth}\colorbox{xkcdWine!30}{\makebox[0.99\linewidth][l]{#1}}
 }
\algnewcommand{\LineComment}[1]{\Statex \(\triangleright\) #1}
\begin{figure}
\begin{minipage}{.49\textwidth}
    \vspace{-0.7cm}
    \begin{algorithm}[H]
            \small
           \caption{Unconditional sampling}
           \label{alg:sampling}
            \begin{algorithmic}[1]
                \Require $N$
               \State $\hat{\rvx}_1 \sim \pi(\rvx), \Delta t \gets \frac{1}{N}$
               \For{$i=N-1$ {\bfseries to} $0$}
                 \State{$t \gets \frac{i+1}{N}$}
                 \item[]
                 \item[]
                 \item[]
                 \State{$\hat{\rvx}_{t-\Delta t} \gets \hat{\rvx}_{t} - f(t) \hat{\rvx}_t \Delta t$}
                 \State{$\hat{\rvx}_{t-\Delta t} \gets \hat{\rvx}_{t-\Delta t} + g(t)^2 \vs_{\vtheta^*}(\hat{\rvx}_{t}, t) \Delta t $}
                 \State{$\rvz \sim \mcal{N}(\bm{0}, \mI)$}
                 \State{$\hat{\rvx}_{t-\Delta t} \gets \hat{\rvx}_{t-\Delta t} + g(t)\sqrt{\Delta t}~\rvz$}
               \EndFor
               \State {\bfseries return} $\hat{\rvx}_0$
            \end{algorithmic}
    \end{algorithm}
\end{minipage}
\begin{minipage}{.49\textwidth}
    \vspace{-0.7cm}
    \begin{algorithm}[H]
            \small
           \caption{Inverse problem solving}
           \label{alg:inverse_problem}
            \begin{algorithmic}[1]
                \Require $N$, $\rvy$, $\lambda$
               \State $\hat{\rvx}_1 \sim \pi(\rvx), \Delta t \gets \frac{1}{N}$
               \For{$i=N-1$ {\bfseries to} $0$}
                 \State{$t \gets \frac{i+1}{N}$}
                 \State{\color{xkcdOrange}$\hat{\rvy}_{t} \sim p_{0t}(\rvy_t \mid \rvy)$}%
                 \State{\color{xkcdOrange}$
                    \hat{\rvx}_{t} \gets \mT^{-1}[\lambda \bm{\Lambda} \mcal{P}^{-1}(\bm{\Lambda})\hat{\rvy}_t + (1 - \lambda) \bm{\Lambda} \mT \hat{\rvx}_t + (\mI - \bm{\Lambda}) \mT \hat{\rvx}_t]$
                    }%
                \State{$\hat{\rvx}_{t-\Delta t} \gets \hat{\rvx}_{t} - f(t) \hat{\rvx}_t \Delta t$}
                 \State{$\hat{\rvx}_{t-\Delta t} \gets \hat{\rvx}_{t-\Delta t} + g(t)^2 \vs_{\vtheta^*}(\hat{\rvx}_{t}, t) \Delta t $}
                 \State{$\rvz \sim \mcal{N}(\bm{0}, \mI)$}
                 \State{$\hat{\rvx}_{t-\Delta t} \gets \hat{\rvx}_{t-\Delta t} + g(t)\sqrt{\Delta t}~\rvz$}%
               \EndFor
               \State {\bfseries return} $\hat{\rvx}_0$
            \end{algorithmic}
    \end{algorithm}
\end{minipage}
\vspace{-0.5em}
\end{figure}

With score-based generative modeling, we can train a score model $\vs_{\vtheta^*}(\rvx, t)$ to generate unconditional samples from the the prior distribution of medical images $p(\rvx)$. To solve inverse problems however, we will need to sample from the posterior $p(\rvx \mid \rvy)$. This can be accomplished by conditioning the original stochastic process $\{\rvx_t\}_{t\in[0,1]}$ on an observation $\rvy$, yielding a \emph{conditional} stochastic process $\{\rvx_t \mid \rvy\}_{t\in[0,1]}$. We denote the marginal distribution at $t$ as $p_t(\rvx_t \mid \rvy)$, and our goal is to sample from $p_0(\rvx_0 \mid \rvy)$, the same distribution as $p(\rvx \mid \rvy)$ by definition.
Much like generating unconditional samples by solving the reverse-time SDE in \cref{eq:reverse_sde}, we can reverse the conditional stochastic process $\{\rvx_t \mid \rvy\}_{t\in[0,1]}$ to sample from the posterior distribution $p_0(\rvx_0 \mid \rvy)$ by solving the following \emph{conditional} reverse-time SDE~\citep{song2021scorebased}:
\begin{align}
    \ud \rvx_t = \left[ f(t)\rvx_t - g(t)^{2} \nabla_{\rvx_t} \log p_{t}(\rvx_t \mid \rvy) \right] \ud t + g(t) \ud \bar{\rvw}_t,\qquad t\in[0,1]. \label{eq:cond_reverse_sde}
\end{align}
The conditional score function $\nabla_{\rvx_t} \log p_{t}(\rvx_t \mid \rvy)$ is a critical part of \cref{eq:cond_reverse_sde}, yet it is non-trivial to compute. One solution is to estimate the score function by training a new score model $\vs_{\vtheta^*}(\rvx_t, \rvy, t)$ that explicitly depends on $\rvy$~\citep{song2021scorebased,dhariwal2021diffusion}, such that $\vs_{\vtheta^*}(\rvx_t, \rvy, t) \approx \nabla_{\rvx_t} \log p_t(\rvx_t \mid \rvy)$. However, this requires paired data $\{(\rvx_i, \rvy_i)\}_{i=1}^N$ for training and has the same drawbacks as supervised learning techniques. We do not consider this approach in this work.

An unsupervised alternative is to approximate the conditional score function with an unconditionally-trained score model $\vs_{\vtheta^*}(\rvx_t, t) \approx \nabla_{\rvx_t} \log p_t(\rvx_t)$ and the measurement distribution $p(\rvy \mid \rvx)$. Many existing works \citep{song2021scorebased,kawar2021snips,kadkhodaie2020solving,jalal2021robust} have implemented this idea in different ways. %
However, the methods in \citet{kawar2021snips} and \citet{kadkhodaie2020solving} both require computing the singular value decomposition (SVD) of $\mA \in \mbb{R}^{m\times n}$, which can be difficult for many measurement processes in medical imaging. The method proposed in \citet{jalal2021robust} is only designed for a specific sampling method called annealed Langevin dynamics~\citep[ALD,][]{song2019generative}, which proves to be inferior to more advanced sampling algorithms such as Predictor-Corrector methods~\citep{song2021scorebased}. %

In what follows, we propose a new conditional sampling approach for inverse problem solving with score-based generative models. Our method is computationally efficient for medical image reconstruction, and is applicable to a large family of iterative sampling methods for score-based generative models. %
At a high level, we first train an unconditional score model $\vs_{\vtheta^*}(\rvx, t)$ on medical images without assuming any measurement process. Given an observation $\rvy$ at test time, we form a stochastic process $\{\rvy_t\}_{t\in[0,1]}$ by adding appropriate noise to $\rvy$. We then discretize the reverse-time SDE in \cref{eq:reverse_sde_model} with existing unconditional samplers for $\vs_{\vtheta^*}(\rvx, t)$, while incorporating the conditional information from $\rvy$ with a proximal optimization step to generate intermediate samples that are consistent with $\{\rvy_t\}_{t\in[0,1]}$.

\subsection{A convenient form of the linear measurement process}\label{sec:method:formulation}
Many different measurement processes in medical imaging share same components of computation. For example, sparse-view CT reconstruction and metal artifact removal for CT both involve computing the same Radon transform. Similarly, MRI measurement processes require computing the same spatial Fourier transform regardless of different downsampling ratios. To rigorously characterize this structure of measurement processes, we propose a special formulation of $\mA$ that is efficient to obtain in medical imaging applications. Without loss of generality, we assume that the linear operator $\mA$ has full rank, \ie, $\operatorname{rank}(\mA) = \min(n, m) = m$. The result below gives the alternative formulation of $\mA$:
\begin{restatable}{proposition}{prop}\label{prop:mask}
If $\operatorname{rank}(\mA) = m$, then there exist an invertible matrix $\mT \in \mbb{R}^{n\times n}$, and a diagonal matrix $\bm{\Lambda} \in \{0, 1\}^{n\times n}$ with $\operatorname{tr}(\bm{\Lambda}) = m$, such that $\mA = \mcal{P}(\bm{\Lambda}) \mT$. Here $\mcal{P}(\bm{\Lambda}) \in \{0, 1\}^{m\times n}$ is an operator that, when multiplied with any vector $\va \in \mbb{R}^n$, reduces its dimensionality to $m$ by removing each $i$-th element of $\va$ for $i=1,2,\cdots,n$ if $\bm{\Lambda}_{ii} = 0$.
\end{restatable}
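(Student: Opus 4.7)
The plan is to prove this by a straightforward constructive argument based on completing the rows of $\mA$ to a basis of $\mbb{R}^n$. Since $\mA$ is stipulated to have rank $m$, its $m$ rows form a linearly independent set in $\mbb{R}^n$, and because $m \le n$, we can extend this set to a basis of $\mbb{R}^n$ by appending $n - m$ additional row vectors $\vb_1, \ldots, \vb_{n-m}$.

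With this completion in hand, I would define $\mT \in \mbb{R}^{n \times n}$ to be the square matrix whose first $m$ rows are exactly the rows of $\mA$, in order, and whose remaining $n - m$ rows are $\vb_1, \ldots, \vb_{n-m}$. By the basis property, $\mT$ is invertible. Then I would set $\bm{\Lambda} \in \{0,1\}^{n \times n}$ to be the diagonal matrix with $\bm{\Lambda}_{ii} = 1$ for $i \le m$ and $\bm{\Lambda}_{ii} = 0$ for $i > m$; plainly $\operatorname{tr}(\bm{\Lambda}) = m$. It remains to check that $\mcal{P}(\bm{\Lambda}) \mT = \mA$: by the definition of $\mcal{P}(\bm{\Lambda})$, left-multiplication extracts exactly the rows of $\mT$ whose index $i$ satisfies $\bm{\Lambda}_{ii} = 1$, which are precisely the first $m$ rows, \ie, the rows of $\mA$ by construction.

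There is really no hard step here; the whole result is a repackaging of the fact that any full-row-rank matrix can be viewed as a coordinate projection composed with an invertible change of basis. The only thing that needs care is making the informal ``delete the $i$-th entry when $\bm{\Lambda}_{ii}=0$'' description of $\mcal{P}(\bm{\Lambda})$ precise enough to justify the final identity; I would do this by noting that $\mcal{P}(\bm{\Lambda})$ is simply the $m \times n$ matrix obtained from $\mI_n$ by deleting the rows indexed by $\{i : \bm{\Lambda}_{ii} = 0\}$, so that $\mcal{P}(\bm{\Lambda}) \mM$ for any $\mM \in \mbb{R}^{n\times k}$ is the submatrix of $\mM$ consisting of the rows indexed by $\{i : \bm{\Lambda}_{ii} = 1\}$. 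Applying this to $\mM = \mT$ gives exactly $\mA$, completing the proof.
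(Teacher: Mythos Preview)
Your proposal is correct and essentially identical to the paper's own proof: both extend the $m$ rows of $\mA$ to a basis of $\mbb{R}^n$, stack the resulting $n$ vectors to form the invertible $\mT$, and take $\bm{\Lambda}=\operatorname{diag}(1,\ldots,1,0,\ldots,0)$ with $m$ ones. If anything, your version is slightly more explicit in verifying $\mcal{P}(\bm{\Lambda})\mT=\mA$, which the paper simply asserts.
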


\begin{figure}
    \centering
    \includegraphics[width=0.49\textwidth]{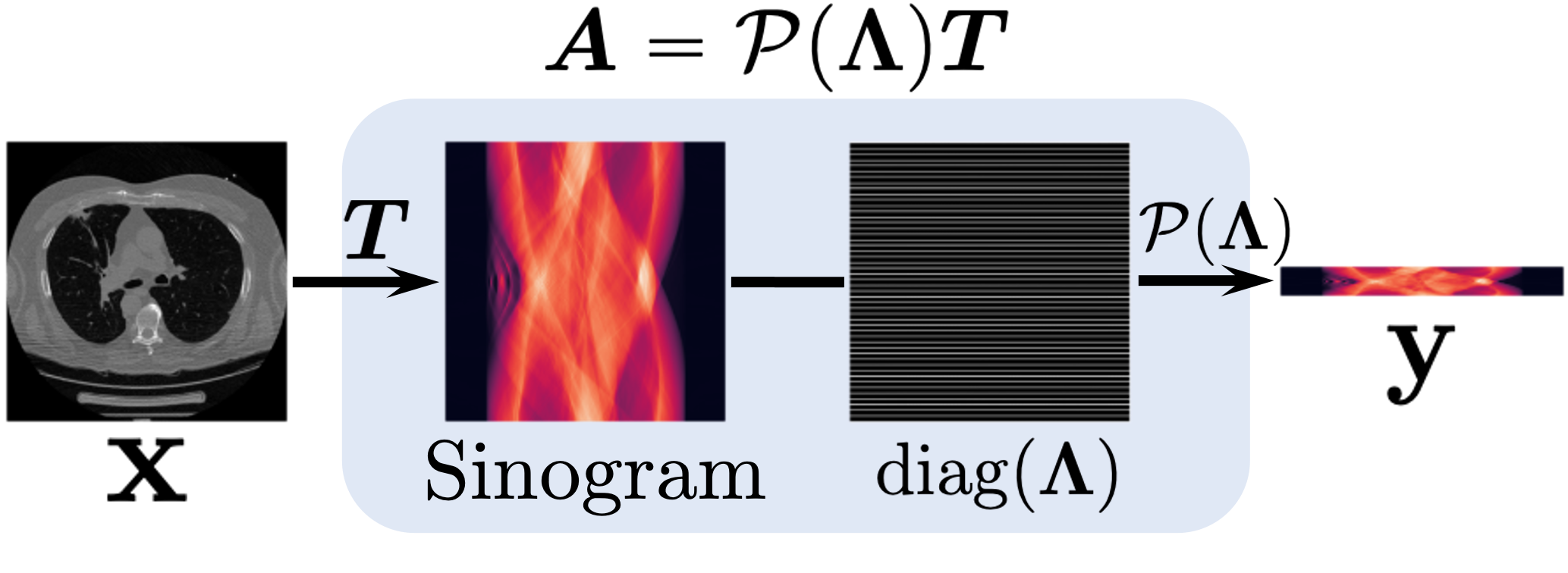}\hfill
    \includegraphics[width=0.49\textwidth]{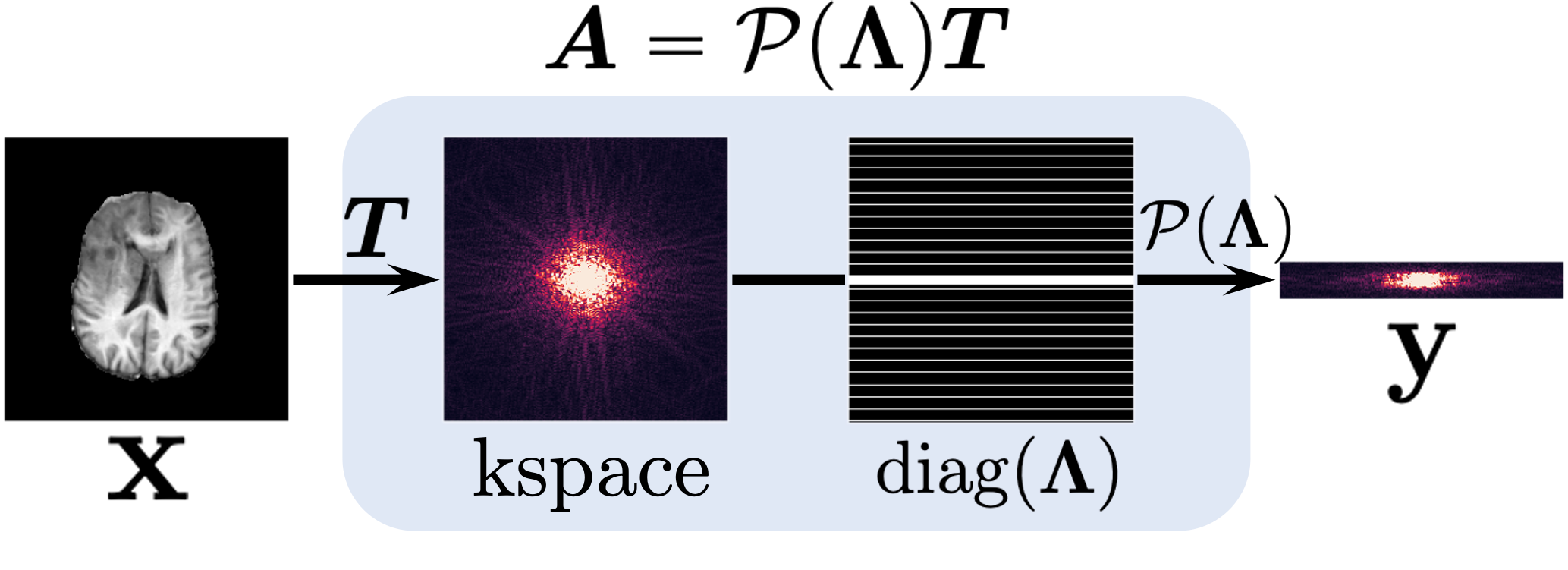}
    \caption{Linear measurement processes for sparse-view CT (\emph{left}) and undersampled MRI (\emph{right}). \label{fig:medical_inverse_probs}}
\end{figure}

We illustrate this decomposition for CT/MRI in \cref{fig:medical_inverse_probs}. Many measurement processes in medical imaging share the same $\mT$, even if they correspond to different $\mA$. For example, $\mT$ corresponds to the Radon transform and Fourier transform in sparse-view CT and undersampled MRI respectively, regardless of the number of measurements, \ie, CT projections and k-space downsampling ratios. For both sparse-view CT reconstruction and metal artifact removal for CT images, the operator $\mT$ is the Radon transform (see \cref{fig:medical_mar}). Intuitively, $\operatorname{diag}(\bm{\Lambda})$ can be viewed as a subsampling mask on the sinogram/k-space, and $\mcal{P}(\bm{\Lambda})$ subsamples the sinogram/k-space into an observation $\rvy$ with a smaller size according to this subsampling mask. In addition, we note that $\mT^{-1}$ can be efficiently implemented with the inverse Radon transform or the inverse Fourier transform in CT/MRI applications.

\subsection{Incorporating a given observation into an unconditional sampling process}
In what follows, we show that the decomposition in \cref{prop:mask} provides an efficient way to generate approximate samples from the conditional stochastic process $\{\rvx_t \mid \rvy\}_{t\in[0,1]}$ with an \emph{unconditional} score model $\vs_{\vtheta^*}(\rvx, t)$. The basic idea is to ``hijack'' the unconditional sampling process of score-based generative models to incorporate an observed measurement $\rvy$.

As we have already discussed, it is difficult to directly solve $\{\rvx_t \mid \rvy\}_{t\in[0,1]}$ for sample generation. To bypass this difficulty, we first consider a related stochastic process that is much easier to sample from. Recall that $p_{0t}(\rvx_t \mid \rvx_0) = \mcal{N}(\rvx_t \mid \alpha(t) \rvx_0, \beta^2(t)\mI)$ where $\alpha(t)$ and $\beta(t)$ can be derived from $f(t)$ and $g(t)$~\citep{song2021scorebased}. Given the unconditional stochastic process $\{\rvx_t \}_{t\in[0,1]}$, we define $\{\rvy_t\}_{t\in[0,1]}$, where $\rvy_t = \mA \rvx_t + \alpha(t) \bm{\epsilon}$. Unlike $\{\rvx_t \mid \rvy\}_{t\in[0,1]}$, the conditional stochastic process $\{\rvy_t \mid \rvy \}_{t\in[0,1]}$ is fully tractable. First, we have $\rvy_0 = \mA \rvx_0 + \alpha(0) \bm{\epsilon} = \mA \rvx_0 + \bm{\epsilon} = \rvy$. Since $p_{0t}(\rvx_t \mid \rvx_0) = \mcal{N}(\rvx_t \mid \alpha(t) \rvx_0, \beta^2(t)\mI)$, we have $\rvx_t=\alpha(t) \rvx_0 + \beta(t) \rvz$, where $\rvz \in \mbb{R}^n \sim \mcal{N}(\bm{0}, \bm{I})$. By definition, $\rvy_t=\mA \rvx_t + \alpha(t) \bm{\epsilon}$, so we have $\rvy_t = \mA (\alpha(t) \rvx_0 + \beta(t) \rvz) + \alpha(t) \bm{\epsilon} = \alpha(t) (\rvy - \bm{\epsilon}) + \beta(t) \mA \rvz + \alpha(t) \bm{\epsilon} = \alpha(t) \rvy + \beta(t)\mA \rvz$. %
Therefore, we can easily generate a sample $\hat{\rvy}_t \sim p_t(\rvy_t \mid \rvy)$ by first drawing $\rvz \sim \mcal{N}(\bm{0}, \bm{I})$ and then computing $\hat{\rvy}_t = \alpha(t) \rvy + \beta(t) \mA \rvz$.

The key of our approach is to modify any existing iterative sampling algorithm designed for the unconditional stochastic process $\{\rvx_t\}_{t\in[0,1]}$ so that the samples are consistent with $\{\rvy_t \mid \rvy\}_{t\in[0,1]}$. In general, an iterative sampling process of score-based generative models selects a sequence of time steps $\{0 = t_0 < t_1 < \cdots < t_N = 1\}$ and iterates according to
\begin{align}
    \hat{\rvx}_{t_{i-1}} = \vh(\hat{\rvx}_{t_i}, \rvz_i, \vs_{\vtheta^*}(\hat{\rvx}_{t_i}, t_i)), \quad i=N, N-1, \cdots, 1,\label{eq:iter}
\end{align}
where $\hat{\rvx}_{t_N} \sim \pi(\rvx)$, $\rvz_i \sim \mcal{N}(\bm{0}, \bm{I})$, and $\vtheta^*$ denotes the parameters in an unconditional score model $\vs_{\vtheta^*}(\rvx, t)$. Here the iteration function $\vh$ takes a noisy sample $\hat{\rvx}_{t_i}$ and reduces the noise therein to generate $\hat{\rvx}_{t_{i-1}}$, using the unconditional score model $\vs_{\vtheta^*}(\rvx, t)$. For example, for the Euler-Maruyama sampler detailed in \cref{alg:sampling}, this iteration function is given by
\begin{align*}
    \vh(\hat{\rvx}_{t_i}, \rvz_i, \vs_{\vtheta^*}(\hat{\rvx}_{t_i}, t_i)) = \hat{\rvx}_{t_i} - f(t_i)\hat{\rvx}_{t_i}/N + g(t_i)^2 \vs_{\vtheta^*}(\hat{\rvx}_{t_i}, t_i)/N + g(t_i)\rvz_i/\sqrt{N}.
\end{align*}
Samples obtained by this procedure $\{\hat{\rvx}_{t_i}\}_{i=0}^N$ constitute an approximation of $\{\rvx_t\}_{t\in[0,1]}$, where the last sample $\hat{\rvx}_{t_0}$ can be viewed as an approximate sample from $p_0(\rvx)$. %
Most existing sampling methods for score-based generative models are instances of this iterative sampling paradigm, including \cref{alg:sampling}, ALD~\citep{song2019generative}, probability flow ODEs~\citep{song2021scorebased} and Predictor-Corrector samplers~\citep{song2021scorebased}.

To enforce the constraint implied by $\{\rvy_t \mid \rvy\}_{t\in[0,1]}$, we prepend an additional step to the iteration rule in \cref{eq:iter}, leading to
\begin{align}
    \hat{\rvx}_{t_i}' &= \vk(\hat{\rvx}_{t_{i}}, \hat{\rvy}_{t_i}, \lambda)\\
    \hat{\rvx}_{t_{i-1}} &= \vh(\hat{\rvx}_{t_i}', \rvz_i, \vs_{\vtheta^*}(\hat{\rvx}_{t_i}, t_i)), \quad i=N,N-1,\cdots, 1,
    \label{eq:iter2}
\end{align}
where $\hat{\rvx}_{t_N} \sim \pi(\rvx)$, $\hat{\rvy}_{t_i} \sim p_{t_i}(\rvy_{t_i} \mid \rvy)$, and $0\leq \lambda \leq 1$ is a hyper-parameter. We provide an illustration of this process in \cref{fig:projection}. The iteration function $\vk(\cdot, \hat{\rvy}_{t_i}, \lambda): \mbb{R}^n \to \mbb{R}^n$ promotes data consistency by %
solving a proximal optimization step~\citep{nesterov2003introductory,boyd2004convex,hammernik2021systematic} that simultaneously minimizes the distance between $\hat{\rvx}_{t_i}'$ and $\hat{\rvx}_{t_i}$, and the distance between $\hat{\rvx}_{t_i}'$ and the hyperplane $\{\vx \in \mbb{R}^n \mid \mA \vx = \hat{\rvy}_{t_{i}}\}$, with a hyperparameter $0 \leq \lambda \leq 1$ balancing between the two:
\begin{align}
    \hat{\rvx}_{t_i}' = \argmin_{\vz \in \mbb{R}^n} \{ (1- \lambda) \norm{\vz - \hat{\rvx}_{t_i}}_\mT^2 + \min_{\vu \in \mbb{R}^n}\lambda \norm{\vz - \vu}_\mT^2\} \quad s.t.  \quad  \mA \vu = \hat{\rvy}_{t_i}. \label{eq:optimize}
\end{align}
Recall that $\mA = \mcal{P}(\bm{\Lambda})\mT$ according to \cref{prop:mask}. In the equation above we choose the norm $\norm{\va}_\mT^2 \coloneqq \norm{\mT \va}_2^2$ to simplify our theoretical analysis. The decomposition in \cref{prop:mask} allows us to derive a closed-form solution to the optimization problem in \cref{eq:optimize}, as given below:%
\begin{restatable}{theorem}{thm}\label{theorem}
The solution of \cref{eq:optimize} can be given by
\begin{align}
    \hat{\rvx}_{t_i}' = \mT^{-1}[\lambda \bm{\Lambda} \mcal{P}^{-1}(\bm{\Lambda})\hat{\rvy}_{t_i} + (1 - \lambda) \bm{\Lambda} \mT \hat{\rvx}_{t_i} + (\mI - \bm{\Lambda}) \mT \hat{\rvx}_{t_i}],
\end{align}
where $\mcal{P}^{-1}(\bm{\Lambda}): \mbb{R}^m \to \mbb{R}^n$ denotes any right inverse of $\mcal{P}(\bm{\Lambda})$. %
\end{restatable}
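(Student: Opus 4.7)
My plan is to remove the $\mT$-weighted norm via the change of variables $\vw \coloneqq \mT \vz$, $\vu' \coloneqq \mT\vu$ with $\vx^\star \coloneqq \mT \hat{\rvx}_{t_i}$ and $\vy^\star \coloneqq \mcal{P}^{-1}(\bm{\Lambda})\hat{\rvy}_{t_i}$. Since $\mA = \mcal{P}(\bm{\Lambda})\mT$ by \cref{prop:mask}, the constraint $\mA\vu = \hat{\rvy}_{t_i}$ becomes $\mcal{P}(\bm{\Lambda})\vu' = \hat{\rvy}_{t_i}$, while $\norm{\cdot}_\mT^2$ collapses to $\norm{\cdot}_2^2$. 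The optimization in \cref{eq:optimize} is then equivalent to
\begin{align*}
\vw^\star = \argmin_{\vw \in \mbb{R}^n}\left\{(1-\lambda)\norm{\vw - \vx^\star}_2^2 + \min_{\substack{\vu' \in \mbb{R}^n \\ \mcal{P}(\bm{\Lambda})\vu' = \hat{\rvy}_{t_i}}} \lambda \norm{\vw - \vu'}_2^2\right\},
\end{align*}
and the theorem amounts to reading off $\hat{\rvx}_{t_i}' = \mT^{-1}\vw^\star$.

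Next I would exploit the selection structure of $\mcal{P}(\bm{\Lambda})$: this operator keeps exactly the coordinates indexed by $\{k : \bm{\Lambda}_{kk} = 1\}$, so the constraint pins those masked entries of $\vu'$ to the matching entries of $\vy^\star$, while the unmasked entries remain free. For fixed $\vw$, the inner minimization therefore sets each unmasked entry of $\vu'$ equal to that of $\vw$, and the inner minimum reduces to $\lambda\norm{\bm{\Lambda}\vw - \bm{\Lambda}\vy^\star}_2^2$. As a by-product, this shows the overall answer is independent of which right inverse is chosen: the ambiguity in $\mcal{P}^{-1}(\bm{\Lambda})$ lies entirely in the unmasked coordinates, and these are annihilated by $\bm{\Lambda}$ in the only place $\vy^\star$ eventually appears.

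For the outer minimization I would apply the orthogonal decomposition $\vw = \bm{\Lambda}\vw + (\mI-\bm{\Lambda})\vw$. The unmasked block appears only in $(1-\lambda)\norm{\vw - \vx^\star}_2^2$, which is minimized by $(\mI-\bm{\Lambda})\vw^\star = (\mI-\bm{\Lambda})\vx^\star$. Each masked coordinate $k$ is governed by an independent scalar strongly-convex quadratic $(1-\lambda)(\vw_k - \vx^\star_k)^2 + \lambda(\vw_k - \vy^\star_k)^2$, whose first-order condition yields the convex combination $(1-\lambda)\vx^\star_k + \lambda\vy^\star_k$. Assembling both blocks gives $\vw^\star = \lambda \bm{\Lambda}\vy^\star + (1-\lambda)\bm{\Lambda}\vx^\star + (\mI-\bm{\Lambda})\vx^\star$; substituting $\vx^\star = \mT\hat{\rvx}_{t_i}$ and $\vy^\star = \mcal{P}^{-1}(\bm{\Lambda})\hat{\rvy}_{t_i}$ and applying $\hat{\rvx}_{t_i}' = \mT^{-1}\vw^\star$ reproduces the formula in the statement.

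There is no real technical obstacle: the objective is strictly convex in each coordinate block over the feasible set, so the first-order optimality conditions are necessary and sufficient, and the minimizer is unique. The only subtlety worth flagging is the well-definedness of the closed form under the non-uniqueness of the right inverse $\mcal{P}^{-1}(\bm{\Lambda})$, which is exactly the invariance observation recorded in the previous paragraph.
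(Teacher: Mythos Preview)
Your proposal is correct and follows essentially the same route as the paper: both reduce the $\norm{\cdot}_\mT$ norm to $\norm{\cdot}_2$ (you via the substitution $\vw=\mT\vz$, the paper by direct expansion), split coordinates along $\bm{\Lambda}$ and $\mI-\bm{\Lambda}$, eliminate the inner minimization over $\vu$, and read off the minimizer of the resulting separable quadratic. Your coordinate-level ``selection structure'' argument is exactly the content of the paper's auxiliary \cref{lemma:iff}, and your remark on invariance under the choice of right inverse is a nice explicit observation that the paper leaves implicit.
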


\begin{figure}
    \centering
    \includegraphics[width=\textwidth]{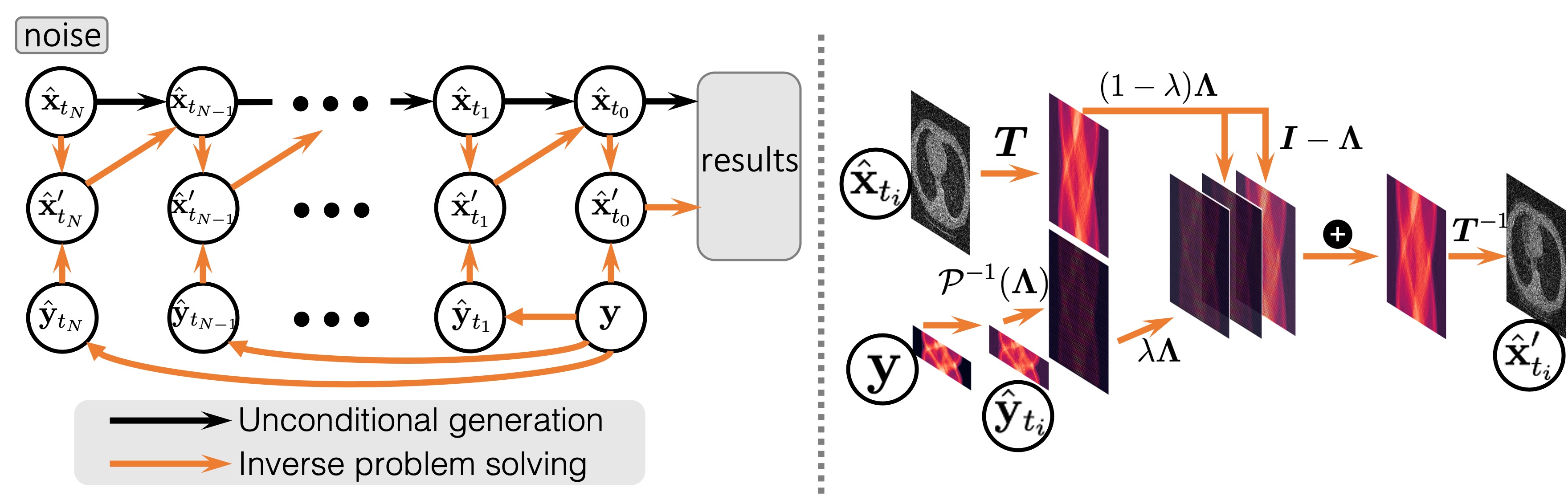}
    \caption{\figleft{} An overview of our method for solving inverse problems with score-based generative models. \figright{} An illustration about how to combine $\hat{\rvx}_{t_i}$ and $\rvy$ to form $\hat{\rvx}_{t_i}'$.%
    }
    \label{fig:projection}
\end{figure}

See \cref{fig:projection} for an illustration of the function $\hat{\rvx}_{t_i}' = \vk(\hat{\rvx}_{t_i}, \hat{\rvy}_{t_i}, \lambda)$. The right inverse $\mcal{P}^{-1}(\bm{\Lambda})$ increases the dimensionality of a vector $\va\in\mbb{R}^m$ to $n$ by putting its entries on every index $i$ of an $n$-dimensional vector where $\bm{\Lambda}_{ii} = 1$. Recall that in sparse-view CT or undersampled MRI, $\operatorname{diag}(\bm{\Lambda})$ represents a subsampling mask, and $\mcal{P}(\bm{\Lambda})$ subsamples the full sinogram/k-space to generate the observation $\rvy$. In this case, $\mcal{P}^{-1}(\bm{\Lambda})$ pads the observation $\rvy$ so that it has the same size as the full sinogram/k-space. 

When $\lambda = 0$, $\hat{\rvx}_{t_i}' = \vk(\hat{\rvx}_{t_i}, \hat{\rvy}_{t_i}, 0) = \hat{\rvx}_{t_i}$ completely ignores the constraint $\mA \hat{\rvx}_{t_i}' = \hat{\rvy}_{t_i}$, in which case our sampling method in \cref{eq:iter2} performs unconditional generation. On the other hand, when $\lambda=1$, $\hat{\rvx}_{t_i}' = \vk(\hat{\rvx}_{t_i}, \hat{\rvy}_{t_i}, 1)$ satisfies $\mA \hat{\rvx}_{t_i}' = \hat{\rvy}_{t_i}$ exactly. When the measurement is noisy, we choose $0 < \lambda < 1$ to allow slackness in the constraint $\mA \hat{\rvx}_{t_i}' = \hat{\rvy}_{t_i}$. %
The value of $\lambda$ is important for balancing between $\hat{\rvx}_{t_i}' \approx \hat{\rvx}_{t_i}$ and $\mA \hat{\rvx}_{t_i}' \approx \hat{\rvy}_{t_i}$. In practice, we use Bayesian optimization to tune this $\lambda$ automatically on a validation dataset. When the measurement process contains no noise, we replace $\hat{\rvx}_{t_0}$ with $\vk(\hat{\rvx}_{t_0}, \rvy, 1)$ at the last sampling step to guarantee $\mA \hat{\rvx}_{t_0} = \rvy$. 

In summary, our method given in \cref{eq:iter2} introduces minimal modifications to an existing iterative sampling method of score-based generative models. For example, we can convert the sampler in \cref{alg:sampling} to an inverse problem solver in \cref{alg:inverse_problem} by adding/modifying just three lines of pseudo-code. Unlike the concurrent work~\citet{jalal2021robust}, our method is not limited to annealed Langevin dynamics (ALD).  As demonstrated in our experiments, we outperform \citet{jalal2021robust} even with the same ALD sampler, and can widen the performance gap further by using more advanced approaches like the Predictor-Corrector sampler~\citep{song2021scorebased}. Unlike \citet{kadkhodaie2020solving,kawar2021snips}, we rely on the efficient alternative representation of $\mA$ given in \cref{sec:method:formulation}, and do not require expensive SVD computation. %

\section{Experiments}

We aim to answer the following questions in this section: (1) Can we directly compete with best-in-class supervised learning techniques for the same measurement process used in their training, even though our approach is fully unsupervised? (2) Can our method generalize better to new measurement processes?  (3) How do we fare against other unsupervised approaches? To study these questions, we experiment on several tasks in medical imaging, including sparse-view CT reconstruction, metal artifact removal (MAR) for CT, and undersampled MRI reconstruction. More experimental details are provided in \cref{app:exp}.

\begin{figure}
    \centering
    \begin{subfigure}[b]{0.161\textwidth}
        \includegraphics[width=\linewidth]{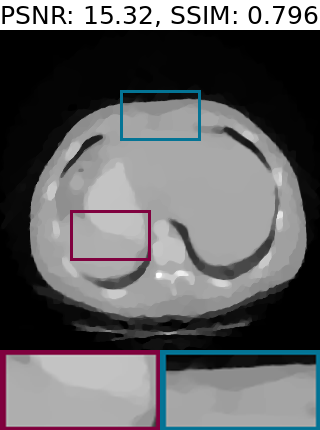}
        \includegraphics[width=\linewidth]{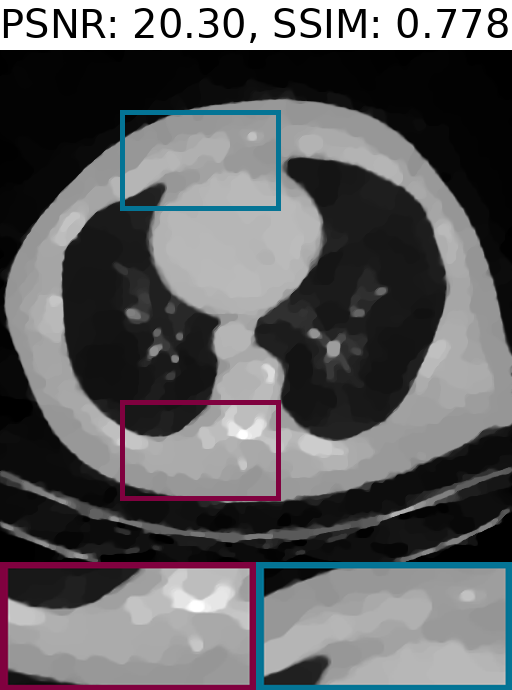}
        \caption{FISTA-TV}
    \end{subfigure}
    \begin{subfigure}[b]{0.161\textwidth}
        \includegraphics[width=\linewidth]{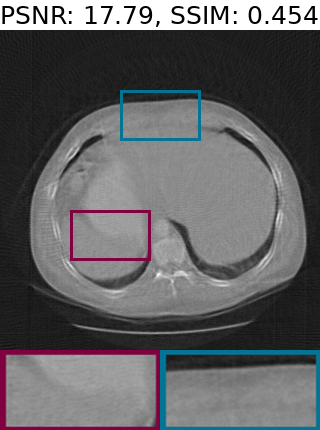}
        \includegraphics[width=\linewidth]{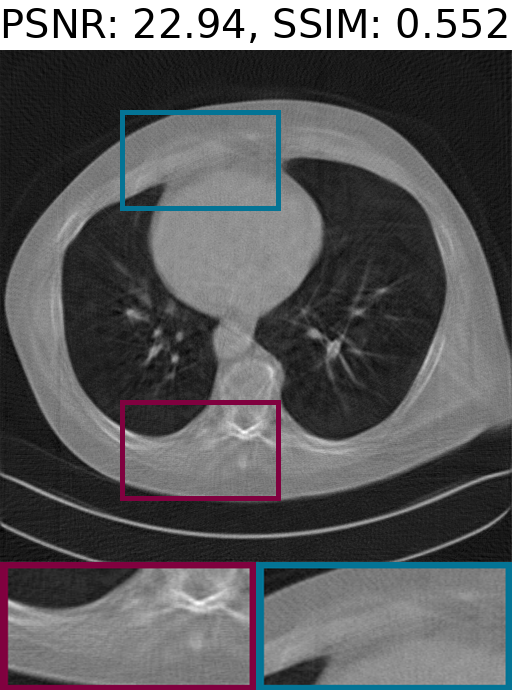}
        \caption{cGAN}
    \end{subfigure}
    \begin{subfigure}[b]{0.161\textwidth}
        \includegraphics[width=\linewidth]{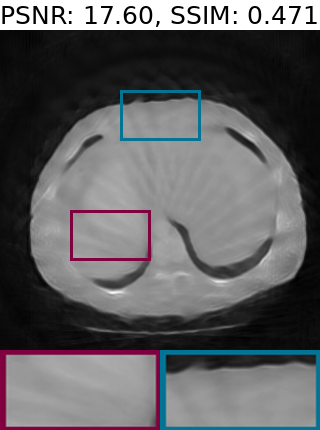}
        \includegraphics[width=\linewidth]{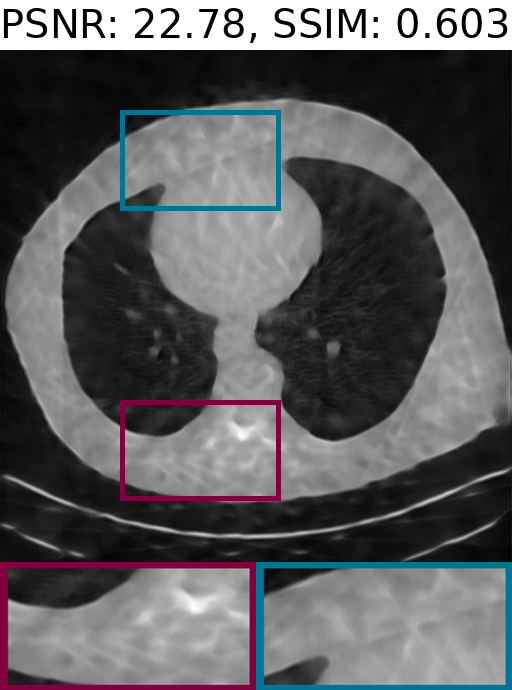}
        \caption{Neumann}
    \end{subfigure}
    \begin{subfigure}[b]{0.161\textwidth}
        \includegraphics[width=\linewidth]{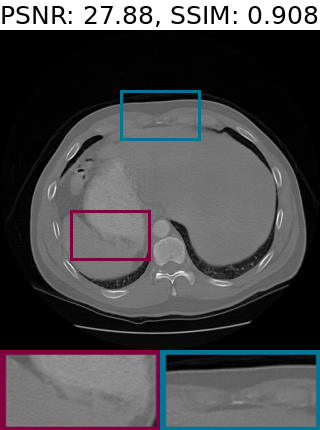}
        \includegraphics[width=\linewidth]{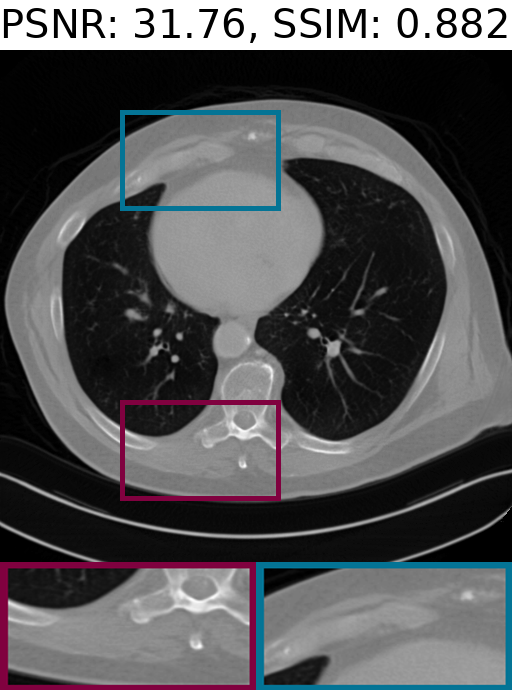}
        \caption{SIN-4c-PRN}
    \end{subfigure}
    \begin{subfigure}[b]{0.161\textwidth}
        \includegraphics[width=\linewidth]{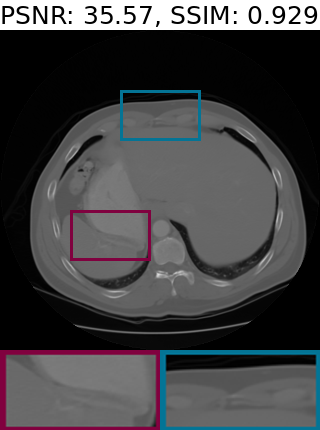}
        \includegraphics[width=\linewidth]{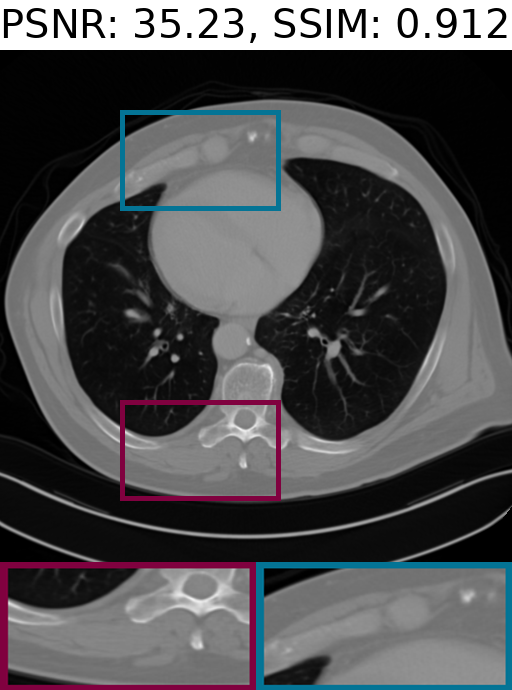}
        \caption{Ours}
    \end{subfigure}
    \begin{subfigure}[b]{0.161\textwidth}
        \includegraphics[width=\linewidth]{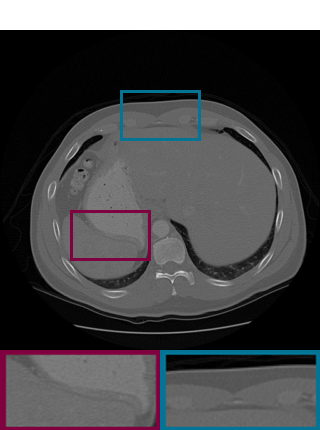}
        \includegraphics[width=\linewidth]{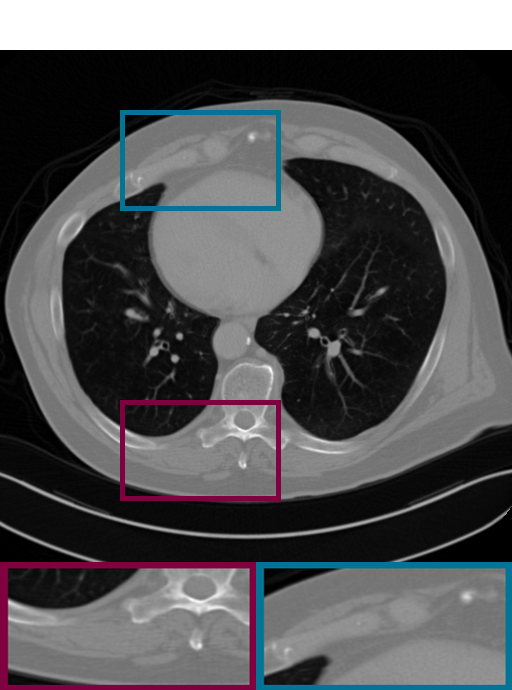}
        \caption{Ground truth}
    \end{subfigure}
    \caption{Examples of sparse-view CT reconstruction results on LIDC $320\times 320$ \emph{(Top row)} and LDCT $512\times 512$ \emph{(Bottom row)}, all with 23 projections. You may zoom in to view more details.}
    \label{fig:ct_demo}
\end{figure}

\textbf{Datasets}~
We consider two datasets for CT experiments. The first is the Lung Image Database Consortium (LIDC) image collection dataset ~\citep{armato2011lidclung, clark2013tcia} where we slice the original 3D CT volumes to obtain 130304 2D images of resolution $320\times 320$ for training. The second is the Low Dose CT (LDCT) Image and Projection dataset~\citep{moen2021ldct} that contains CT scans of multiple anatomic sites, including head, chest, and abdomen, from which we generate 47006 2D image slices of resolution $512\times 512$ for training. 
We simulate CT measurements (sinograms) with a parallel-beam geometry using projection angles equally distributed across 180 degrees. For MAR experiments, we follow \cite{yu2020deep} to synthesize metal artifacts.
For undersampled MRI experiments, we use the Brain Tumor Segmentation (BraTS) 2021 dataset~\citep{menze2014brats, bakas2017brats2}, where we slice 3D MRI volumes to get 297270 images of resolution $240 \times 240$ as the training dataset. We simulate MRI measurements with Fast Fourier Transform using a single-coil setup, and follow~\citet{zbontar2018fastMRI,knoll2020fastmri} to undersample the k-space with an equispaced Cartesian mask. The performance is measured on 1000 test images with peak signal-to-noise ratio (PSNR) and structural similarity (SSIM). 

\textbf{Standard techniques in medical imaging}~
We include two standard learning-free techniques as baselines for sparse-view CT reconstruction. The first is filtered back projection on sparse-view sinograms, which is denoted by ``FBP''. The second is an iterative reconstruction method with total variation regularization called FISTA-TV~\citep{beck2009fista}. For MAR experiments, we include another learning-free baseline called linear interpolation~\citep[LI,][]{kalender1987reduction}.

\begin{table}
    \definecolor{h}{gray}{0.9}
    \caption{Results for undersampled MRI reconstruction on BraTS. First two methods are supervised learning techniques trained with $8\times $ acceleration. The others are unsupervised techniques. \label{tab:mri}}
    \centering
    {\setlength{\extrarowheight}{1.5pt}
    \begin{adjustbox}{max width=\textwidth}
    \begin{tabular}{c|cc|cc|cc}
    \Xhline{3\arrayrulewidth}
    \multirow{2}{*}[-1pt]{Method}  & \multicolumn{2}{c|}{24$\times$ Acceleration} & \multicolumn{2}{c|}{8$\times$ Acceleration} & \multicolumn{2}{c}{4$\times$ Acceleration} \\\cline{2-7}
    & PSNR$\uparrow$ & SSIM$\uparrow$ & PSNR$\uparrow$ & SSIM$\uparrow$ & PSNR$\uparrow$ & SSIM$\uparrow$\\
    \hline
    Cascade DenseNet & 23.39\scalebox{0.7}{$\pm$2.17} & 0.765\scalebox{0.7}{$\pm$0.042} & 28.35\scalebox{0.7}{$\pm$2.30} & 0.845\scalebox{0.7}{$\pm$0.038} & 30.97\scalebox{0.7}{$\pm$2.33} & 0.902\scalebox{0.7}{$\pm$0.028}\\
    DuDoRNet & 18.46\scalebox{0.7}{$\pm$3.05} & 0.662\scalebox{0.7}{$\pm$0.093} & \textbf{37.88\scalebox{0.7}{$\pm$3.03}} & \textbf{0.985\scalebox{0.7}{$\pm$0.007}} & 30.53\scalebox{0.7}{$\pm$4.13} & 0.891\scalebox{0.7}{$\pm$0.071}\\
    Score SDE & 27.83\scalebox{0.7}{$\pm$2.73} & 0.849\scalebox{0.7}{$\pm$0.038} & 35.04\scalebox{0.7}{$\pm$2.11} & 0.943\scalebox{0.7}{$\pm$0.016} & 37.55\scalebox{0.7}{$\pm$2.08} & 0.960\scalebox{0.7}{$\pm$0.013}\\
    Langevin & 28.80\scalebox{0.7}{$\pm$3.21} & 0.873\scalebox{0.7}{$\pm$0.039} & 36.44\scalebox{0.7}{$\pm$2.28} & 0.952\scalebox{0.7}{$\pm$0.016} & 38.76\scalebox{0.7}{$\pm$2.32} & 0.966\scalebox{0.7}{$\pm$0.012}\\\hline \rowcolor{h}
    Ours & \textbf{29.42\scalebox{0.7}{$\pm$3.03}} & \textbf{0.880\scalebox{0.7}{$\pm$0.035}} & 37.63\scalebox{0.7}{$\pm$2.70} & 0.958\scalebox{0.7}{$\pm$0.015} & \textbf{39.91\scalebox{0.7}{$\pm$2.67}} & \textbf{0.965\scalebox{0.7}{$\pm$0.013}}\\
    \Xhline{3\arrayrulewidth}
    \end{tabular}
    \end{adjustbox}
    }
\end{table}

\begin{table}
    \centering
    \definecolor{h}{gray}{0.9}
    \caption{Results for sparse-view CT reconstruction on LIDC and LDCT. FISTA-TV is a standard iterative reconstruction method that does not need training. cGAN, Neumann, and SIN-4c-PRN are supervised learning techniques trained with 23 projection angles.
    \label{tab:ct}}
    {\setlength{\extrarowheight}{1.5pt}
    \begin{adjustbox}{max width=0.8\textwidth}
    \begin{tabular}{c|c|cc|cc}
    \Xhline{3\arrayrulewidth}
    \multirow{2}{*}[-1pt]{Method} & \multirow{2}{*}[-1pt]{Projections} & \multicolumn{2}{c|}{LIDC $320\times 320$} & \multicolumn{2}{c}{LDCT $512\times 512$}\\\cline{3-6}
      &  & PSNR$\uparrow$ & SSIM$\uparrow$ & PSNR$\uparrow$ & SSIM$\uparrow$ \\
    \hline
    FBP & 23 & 10.18\scalebox{0.7}{$\pm$1.38} & 0.230\scalebox{0.7}{$\pm$0.072} & 10.11\scalebox{0.7}{$\pm$1.19} & 0.302\scalebox{0.7}{$\pm$0.078}\\
    FISTA-TV & 23 & 20.08\scalebox{0.7}{$\pm$4.89} & 0.799\scalebox{0.7}{$\pm$0.061} & 21.88\scalebox{0.7}{$\pm$4.42} & 0.850\scalebox{0.7}{$\pm$0.067}\\
    cGAN & 23 & 19.83\scalebox{0.7}{$\pm$3.07} & 0.479\scalebox{0.7}{$\pm$0.103} & 19.90\scalebox{0.7}{$\pm$2.52} & 0.545\scalebox{0.7}{$\pm$0.065}\\
    Neumann & 23 & 17.18\scalebox{0.7}{$\pm$3.79} & 0.454\scalebox{0.7}{$\pm$0.128} & 18.83\scalebox{0.7}{$\pm$3.29} & 0.525\scalebox{0.7}{$\pm$0.073}\\
    SIN-4c-PRN & 23 & 30.48\scalebox{0.7}{$\pm$3.99} & 0.895\scalebox{0.7}{$\pm$0.047} & 34.82\scalebox{0.7}{$\pm$3.55} & 0.877\scalebox{0.7}{$\pm$0.116}\\
    \hline \rowcolor{h}
    & 10 & 29.52\scalebox{0.7}{$\pm$2.63} & 0.823\scalebox{0.7}{$\pm$0.061} & 28.96\scalebox{0.7}{$\pm$4.41} & 0.849\scalebox{0.7}{$\pm$0.086} \\ \rowcolor{h}
     & 20 & 34.40\scalebox{0.7}{$\pm$2.66} & 0.895\scalebox{0.7}{$\pm$0.048} & 36.80\scalebox{0.7}{$\pm$4.50} & 0.936\scalebox{0.7}{$\pm$0.058}\\ \rowcolor{h}
     \multirow{-3}{*}{Ours} & 23 & \textbf{35.24\scalebox{0.7}{$\pm$2.71}} & \textbf{0.905\scalebox{0.7}{$\pm$0.046}} & \textbf{37.41\scalebox{0.7}{$\pm$4.62}} & \textbf{0.941\scalebox{0.7}{$\pm$0.057}}\\
    \Xhline{3\arrayrulewidth}
    \end{tabular}
    \end{adjustbox}
    }
    \vspace{-1em}
\end{table}

\textbf{Supervised learning baselines}~
For sparse-view CT on both LIDC and LDCT, we include cGAN~\citep{ghani2018cgan}, Neumann~\citep{gilton2019neumann}, and SIN-4c-PRN~\citep{wei2020twostep} as supervised learning baselines. We follow the settings in \citet{wei2020twostep} and train all methods with 23 projection angles. For MAR, we use cGANMAR~\citep{wang2018conditional} and SNMAR~\citep{yu2020deep} as the baselines.
For undersampled MRI on BraTS, we compare against Cascade DenseNet~\citep{NEURIPS2019cascadenet} and DuDoRNet~\citep{zhou2020dudornet}, which are both trained with a $8\times$ acceleration factor by measuring only $1/8$ of the full k-space.

\textbf{Unsupervised learning baselines}~
For unsupervised techniques, so far only score-based generative models have witnessed success on clinic data. We compare with several existing methods that apply score-based generative models to inverse problem solving. Specifically, we consider the ``Langevin'' approach proposed in \citet{jalal2021robust}, and the ``Score SDE'' method in \citet{song2021scorebased}, where the former is limited to annealed Langevin dynamics (ALD) sampling, and the latter was based on a crude approximation to the conditional score function $\nabla_{\rvx_t} \log p_t(\rvx_t \mid \rvy)$ in \cref{eq:cond_reverse_sde}, and was proposed as a theoretical possibility in Appendix I.4 of \citet{song2021scorebased} without experiments. %
We only focus on undersampled MRI for these baselines, since it is the only medical imaging problem ever tackled with score-based generative models before our work. All methods share the same score models and only differ in terms of inference. We make sure all sampling algorithms have comparable number of iteration steps ($N$ in \cref{eq:iter,eq:iter2}).

\begin{wraptable}[8]{r}{0.35\textwidth}
    \vspace{-1em}
    \centering
    \definecolor{h}{gray}{0.9}
    \caption{MAR results on LIDC.\label{tab:mar}}
    {\setlength{\extrarowheight}{1.5pt}
    \begin{adjustbox}{max width=0.35\textwidth}
    \begin{tabular}{c|cc}
    \Xhline{3\arrayrulewidth}
    Method & PSNR$\uparrow$ & SSIM$\uparrow$ \\
    \hline
    LI & 26.30\scalebox{0.7}{$\pm$2.62} & 0.910\scalebox{0.7}{$\pm$0.028}\\
    cGANMAR & 27.27\scalebox{0.7}{$\pm$1.96} & 0.927\scalebox{0.7}{$\pm$0.060}\\
    SNMAR & 27.28\scalebox{0.7}{$\pm$1.43} & 0.937\scalebox{0.7}{$\pm$0.048}\\
    \hline \rowcolor{h}
    Ours & \textbf{32.16\scalebox{0.7}{$\pm$2.32}} & \textbf{0.939\scalebox{0.7}{$\pm$0.022}}\\
    \Xhline{3\arrayrulewidth}
    \end{tabular}
    \end{adjustbox}
    }
\end{wraptable}
\textbf{Competing with supervised learning approaches}~
Thanks to the outstanding sample quality of score-based generative models, we can achieve comparable or better performance than best-in-class supervised learning methods even for the same measurement process used in their training. As shown in \cref{tab:ct}, we outperform the top supervised learning technique SIN-4c-PRN on sparse-view CT reconstruction by a significant margin, on both the LIDC and LDCT datasets. Our results with 20 measurements are even better than supervised learning counterparts with 23 measurements. In \cref{fig:ct_demo}, we provide a visual comparison of the reconstruction quality for various methods, where it is clear to see that our method can recover more details faithfully. From results in \cref{tab:mar}, we also outperform the top supervised learning method SNMAR on metal artifact removal. As shown in \cref{fig:mar_demo}, our method generates images with less artifacts and preserves the structure better. For undersampled MRI reconstruction results given in \cref{tab:mar,tab:mri}, our method is ranked the 2nd for the case of $8\times$ acceleration, with comparable performance to the top supervised method DuDoRNet.

\textbf{Generalizing to different number of measurements}~
Since our approach is fully unsupervised, we can naturally apply the same score model to different measurement processes. We first consider changing the number of measurements at the test time, \eg, using different number of projection angles (\resp different acceleration factors) for sparse-view CT (\resp undersampled MRI) reconstruction. As shown in \cref{tab:mri} and \cref{fig:curves}~\figleft, we achieve the best performance on undersampled MRI for both $24\times$ and $4\times$ acceleration factors, whereas DuDoRNet fails to generalize when the acceleration factor changes. The other supervised learning approach Cascade DenseNet demonstrates limited adaptability by building a model architecture inspired by the physical measurement process of MRI, but fails to yield top-level performance. For sparse-view CT reconstruction, all supervised learning methods struggle to generalize to different projection angles, as shown in \cref{fig:curves}~\figcenter.

\textbf{Generalizing to different measurement processes in CT}~
We can perform both sparse-view CT reconstruction and metal artifact removal (MAR) with a single score model trained on CT images. These two tasks are inverse problems in CT imaging with different measurement processes $\mA$, but they share the same $\mT$ in the decomposition of \cref{prop:mask}. We provide a visualization of the measurement process corresponding to MAR in \cref{fig:medical_mar}.
As shown in \cref{tab:mar}, we can outperform supervised learning techniques specifically designed and trained for MAR, while using the same score model used in sparse-view CT reconstruction on LIDC.

\textbf{Comparing against existing score-based methods}~
We compare our method against Langevin~\citep{jalal2021robust} and Score SDE~\citep{song2021scorebased} for undersampled MRI reconstruction on BraTS. Two variants of our approach are considered, which respectively use annealed Langevin dynamics (ALD) and the Predictor-Corrector (PC) sampler for score-based generative models as the backend. We denote the former by ``ALD + Ours'', and the latter by ``PC + Ours'' (our default method for all other experiments). Recall that Langevin uses ALD as the sampler, same as ``ALD + Ours''. All results are provided in \cref{fig:curves}~\figright. We observe that ``ALD + Ours'' uniformly outperform Langevin and Score SDE across all numbers of measurements in the experiment. Moreover, ``PC + Ours'' can further improve ``ALD + Ours'', demonstrating the power of switching to more advanced sampling methods of score-based generative models in our proposed approach.

\begin{figure}
    \centering
    \includegraphics[width=0.33\linewidth]{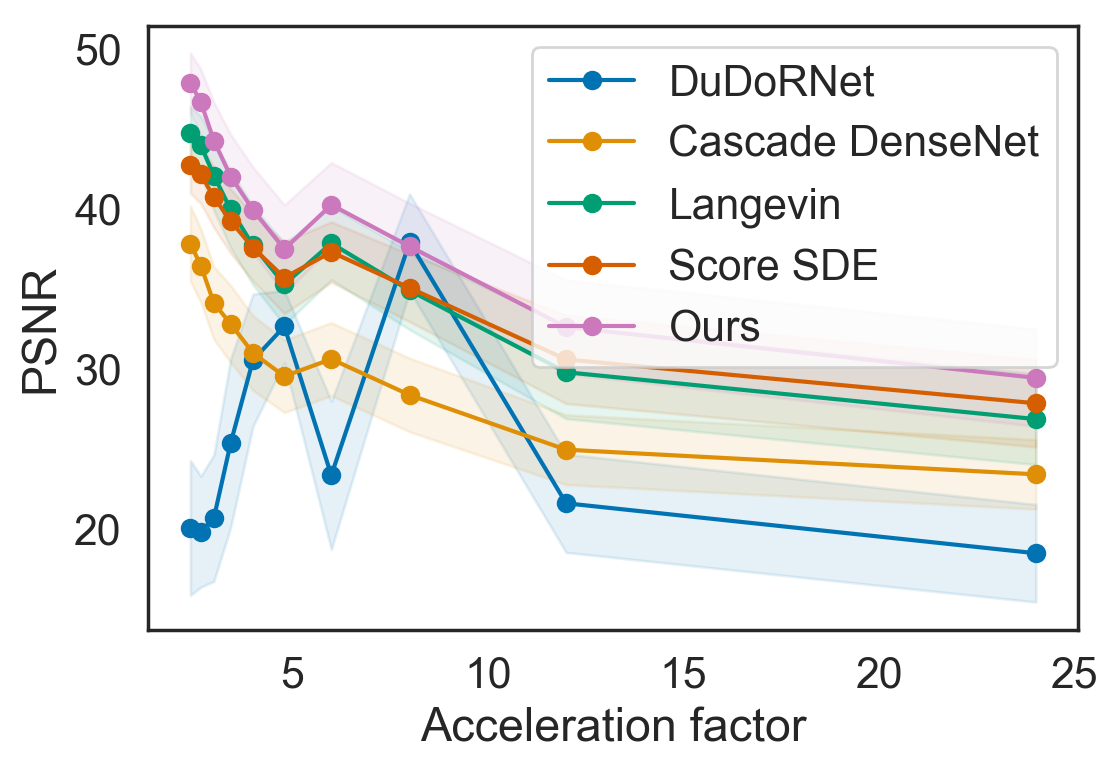}%
    \includegraphics[width=0.33\linewidth]{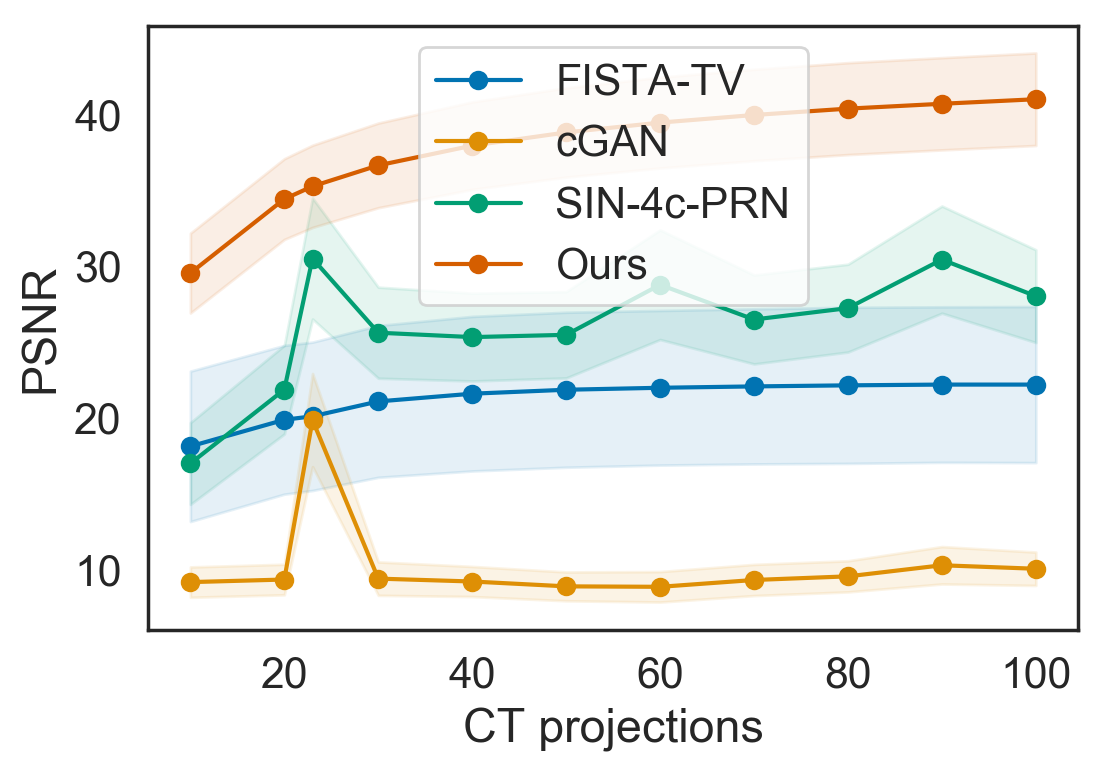}%
    \includegraphics[width=0.33\linewidth]{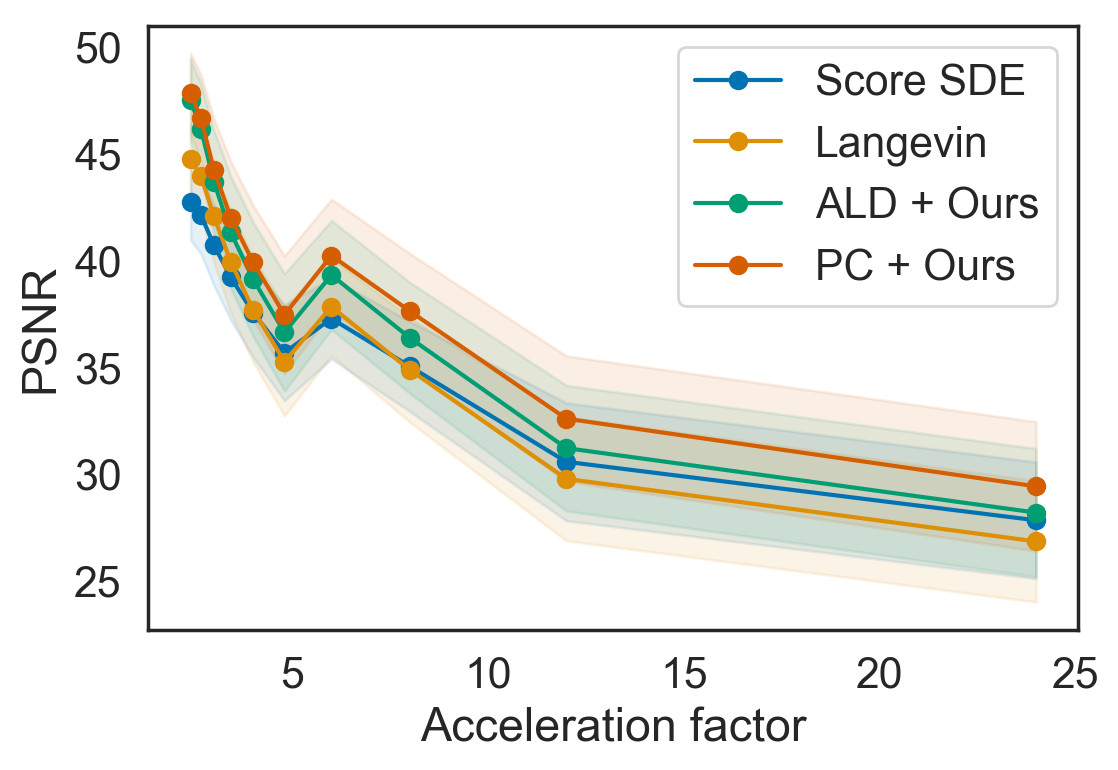}
    \caption{Performance vs. numbers of measurements. Shaded areas represent standard deviation. \figleft~MRI on BraTS. \figcenter~CT on LIDC. \figright~Comparing score-based generative models for undersampled MRI reconstruction on BraTS. \label{fig:curves}}
\end{figure}
\section{Conclusion}
We propose a new method to solve linear inverse problems with score-based generative models. Our method is fully unsupervised, requires no paired data for training, can flexibly adapt to different measurement processes at test time, and only requires minimal modifications to a large number of existing sampling methods of score-based generative models. Empirical results demonstrate that our method can match or outperform existing supervised learning counterparts on image reconstruction for sparse-view CT and undersampled MRI, and has better generalization to new measurement processes, such as using a different number of projections or downsampling ratios in CT/MRI, and tackling both sparse-view CT reconstruction and metal artifact removal with a single model.

\subsubsection*{Author Contributions}
Yang Song designed the project, wrote the paper, and ran all experiments for score-based generative models. Liyue Shen preprocessed data, ran all baseline experiments, and helped write the paper. Lei Xing and Stefano Ermon supervised the project, provided valuable feedback, and helped edit the paper.

\subsubsection*{Acknowledgments}
YS is supported by the Apple PhD Fellowship in AI/ML. LS is supported by the Stanford Bio-X Graduate Student Fellowship. This research was supported by NSF (\#1651565, \#1522054, \#1733686), ONR (N000141912145), AFOSR (FA95501910024), ARO (W911NF-21-1-0125), Sloan Fellowship, and Google TPU Research Cloud. This research was also supported by NIH/NCI (1R01 CA256890 and 1R01 CA227713).

\bibliography{main.bib}
\bibliographystyle{iclr2022_conference}

\appendix
\newpage
\section{Proofs}
\prop*
\begin{proof}
Let $\mA = (\va_1\tran, \va_2\tran, \cdots, \va_m\tran) \in \mbb{R}^{m \times n}$. Since $\mA$ has full rank, the row vectors $\{\va_1, \va_2, \cdots, \va_m\}$ are linearly independent. We can therefore extend them to a total of $n$ linearly independent vectors, \ie, $\{\va_1, \va_2, \cdots, \va_m, \vb_1, \cdots, \vb_{n-m}\}$. Due to the linear independence, we know $\mT = (\va_1\tran, \va_2\tran, \cdots, \va_m\tran, \vb_1\tran, \cdots, \vb_{n-m}\tran) \in \mbb{R}^{n\times n}$ has full rank and is invertible. Next, we define
\begin{align*}
    \bm{\Lambda} = \operatorname{diag}(\underbrace{1, 1, \cdots, 1}_{m}, \underbrace{0, 0, \cdots, 0}_{n-m}),
\end{align*}
where $\operatorname{diag}$ converts a vector to a diagonal matrix. Clearly $\operatorname{tr}(\bm{\Lambda}) = m$ and $\mA = \mcal{P}(\bm{\Lambda})\mT$, which completes the proof.
\end{proof}

\begin{lemma}\label{lemma:iff}
Let $\mcal{P}^{-1}(\bm{\Lambda}): \mbb{R}^m \to \mbb{R}^n $ be any right inverse of $\mcal{P}(\bm{\Lambda}): \mbb{R}^n \to \mbb{R}^m$. For any $\vu \in \mbb{R}^n$ and $\hat{\rvy}_t \in \mbb{R}^m$, we have
\begin{align*}
    \mcal{P}(\bm{\Lambda}) \mT \vu = \hat{\rvy}_t \iff \bm{\Lambda} \mT \vu = \bm{\Lambda} \mcal{P}^{-1}(\bm{\Lambda}) \hat{\rvy}_t
\end{align*}
\end{lemma}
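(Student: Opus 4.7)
The plan is to prove both directions of the biconditional by left-multiplying the given equation with a suitable operator and using two structural identities about $\bm{\Lambda}$ and $\mcal{P}(\bm{\Lambda})$. Specifically, I want to establish the two elementary facts:
\begin{align*}
\mcal{P}(\bm{\Lambda}) \bm{\Lambda} &= \mcal{P}(\bm{\Lambda}), \\
\bm{\Lambda} \mcal{P}^{-1}(\bm{\Lambda}) \mcal{P}(\bm{\Lambda}) &= \bm{\Lambda}.
\end{align*}
The first holds because $\mcal{P}(\bm{\Lambda})$ extracts precisely the coordinates $i$ with $\bm{\Lambda}_{ii}=1$, and $\bm{\Lambda}$ only zeros out coordinates with $\bm{\Lambda}_{ii}=0$, so the extracted entries are untouched. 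The second is the key identity: for any $\vw\in\mbb{R}^n$, the entries of $\mcal{P}^{-1}(\bm{\Lambda})\mcal{P}(\bm{\Lambda})\vw$ at positions $i$ with $\bm{\Lambda}_{ii}=1$ must agree with the corresponding entries of $\vw$, because $\mcal{P}(\bm{\Lambda})\mcal{P}^{-1}(\bm{\Lambda})=\mI_m$ pins those coordinates down, while $\bm{\Lambda}$ then discards the unpinned coordinates.

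For the forward direction ($\Rightarrow$), starting from $\mcal{P}(\bm{\Lambda})\mT\vu=\hat{\rvy}_t$, I would apply $\bm{\Lambda}\mcal{P}^{-1}(\bm{\Lambda})$ on the left of both sides; the left side collapses via the second identity (with $\vw=\mT\vu$) to $\bm{\Lambda}\mT\vu$, yielding $\bm{\Lambda}\mT\vu=\bm{\Lambda}\mcal{P}^{-1}(\bm{\Lambda})\hat{\rvy}_t$. For the reverse direction ($\Leftarrow$), I would apply $\mcal{P}(\bm{\Lambda})$ on the left of both sides; the left side collapses via the first identity to $\mcal{P}(\bm{\Lambda})\mT\vu$, and the right side collapses as $\mcal{P}(\bm{\Lambda})\bm{\Lambda}\mcal{P}^{-1}(\bm{\Lambda})\hat{\rvy}_t = \mcal{P}(\bm{\Lambda})\mcal{P}^{-1}(\bm{\Lambda})\hat{\rvy}_t = \hat{\rvy}_t$ using first the first identity and then the right-inverse property.

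There is no serious obstacle here; everything follows from the combinatorial action of $\mcal{P}(\bm{\Lambda})$ and $\bm{\Lambda}$ on coordinates. The only mild subtlety is that $\mcal{P}^{-1}(\bm{\Lambda})$ is not unique (the coordinates where $\bm{\Lambda}_{ii}=0$ are free), but multiplying by $\bm{\Lambda}$ on the left erases that ambiguity, which is exactly why the statement is formulated with $\bm{\Lambda}\mcal{P}^{-1}(\bm{\Lambda})\hat{\rvy}_t$ rather than $\mcal{P}^{-1}(\bm{\Lambda})\hat{\rvy}_t$ on the right-hand side. I would therefore present the proof as the two short identities, followed by the two three-line computations above.
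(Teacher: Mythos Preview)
Your proposal is correct and follows essentially the same approach as the paper: both proofs use $\mcal{P}(\bm{\Lambda})\bm{\Lambda}=\mcal{P}(\bm{\Lambda})$ and handle the $\Leftarrow$ direction identically by left-multiplying with $\mcal{P}(\bm{\Lambda})$. The only cosmetic difference is in the $\Rightarrow$ direction, where the paper invokes the equivalence $\mcal{P}(\bm{\Lambda})\va=\mcal{P}(\bm{\Lambda})\vb \iff \bm{\Lambda}\va=\bm{\Lambda}\vb$ after inserting $\mcal{P}(\bm{\Lambda})\mcal{P}^{-1}(\bm{\Lambda})$ on the right-hand side, whereas you left-multiply by $\bm{\Lambda}\mcal{P}^{-1}(\bm{\Lambda})$ and use your identity $\bm{\Lambda}\mcal{P}^{-1}(\bm{\Lambda})\mcal{P}(\bm{\Lambda})=\bm{\Lambda}$; these are equivalent formulations of the same coordinate observation.
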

\begin{proof}
By the definition of $\mcal{P}(\bm{\Lambda})$, we have $\mcal{P}(\bm{\Lambda}) = \mcal{P}(\bm{\Lambda}) \bm{\Lambda}$, and 
\begin{align}
\forall \va \in \mbb{R}^n, \vb \in \mbb{R}^n:\quad  \mcal{P}(\bm{\Lambda})\va =  \mcal{P}(\bm{\Lambda})\vb \iff \bm{\Lambda} \va = \bm{\Lambda} \vb. \label{eq:iff}
\end{align}
To prove the ``if'' direction, we note that
\begin{align*}
    \bm{\Lambda} \mT \vu = \bm{\Lambda} \mcal{P}^{-1}(\bm{\Lambda}) \hat{\rvy}_t &\implies \mcal{P}(\bm{\Lambda})\bm{\Lambda} \mT \vu =\mcal{P}(\bm{\Lambda}) \bm{\Lambda} \mcal{P}^{-1}(\bm{\Lambda}) \hat{\rvy}_t \\
    &\implies \mcal{P}(\bm{\Lambda}) \mT \vu = \mcal{P}(\bm{\Lambda}) \mcal{P}^{-1}(\bm{\Lambda}) \hat{\rvy}_t\\
    &\implies\mcal{P}(\bm{\Lambda}) \mT \vu =  \hat{\rvy}_t.
\end{align*}
To prove the ``only if'' direction, we have
\begin{align*}
\mcal{P}(\bm{\Lambda}) \mT \vu = \hat{\rvy}_t &\implies \mcal{P}(\bm{\Lambda}) \mT \vu = \mcal{P}(\bm{\Lambda})\mcal{P}^{-1}(\bm{\Lambda}) \hat{\rvy}_t\\
&\stackrel{(i)}{\implies} \bm{\Lambda} \mT \vu = \bm{\Lambda} \mcal{P}^{-1}(\bm{\Lambda}) \hat{\rvy}_t,
\end{align*}
where (i) is due to the property in \cref{eq:iff}. This completes the proof for both directions.

\end{proof}
\thm*
\begin{proof}
The optimization objective function in \cref{eq:optimize} can be written as
\begin{align*}
 &(1- \lambda) \norm{\vz - \hat{\rvx}_t }_\mT^2 + \lambda \norm{\vz - \vu}_\mT^2\\
= & (1- \lambda) \norm{\mT \vz - \mT \hat{\rvx}_t }_2^2 + \lambda \norm{\mT \vz - \mT\vu}_2^2 \\
= & (1- \lambda) \norm{\mT \vz - \mT \hat{\rvx}_t }_2^2 + \lambda \norm{\bm{\Lambda} \mT (\vz - \vu) + (\mI - \bm{\Lambda})\mT (\vz - \vu)}_2^2\\
= & (1- \lambda) \norm{\mT \vz - \mT \hat{\rvx}_t }_2^2 + \lambda \norm{\bm{\Lambda} \mT (\vz - \vu)}_2^2 + \lambda \norm{(\mI - \bm{\Lambda})\mT (\vz - \vu)}_2^2\\
= & (1- \lambda) \norm{\mT \vz - \mT \hat{\rvx}_t }_2^2 + \lambda \norm{\bm{\Lambda} \mT \vz - \bm{\Lambda} \mcal{P}^{-1}(\bm{\Lambda}) \hat{\rvy}_t}_2^2 + \lambda \norm{(\mI - \bm{\Lambda})\mT (\vz - \vu)}_2^2
\end{align*}
Since $\mA \vu = \hat{\rvy}_t$, we have $\mcal{P}(\bm{\Lambda}) \mT \vu = \hat{\rvy}_t$ and equivalently $\bm{\Lambda} \mT \vu = \bm{\Lambda} \mcal{P}^{-1}(\bm{\Lambda}) \hat{\rvy}_t$ due to \cref{lemma:iff}. This constraint does not restrict the value of $(\mI - \bm{\Lambda}) \mT \vu$. Therefore, when $\mA \vu = \hat{\rvy}_t$, we have
\begin{align*}
    &\norm{\vz - \hat{\rvx}_t }_\mT^2 + \min_{\vu} (1- \lambda)  \lambda \norm{\vz - \vu}_\mT^2\\
    =&  (1- \lambda) \norm{\mT \vz - \mT \hat{\rvx}_t }_2^2 + \min_{\vu} \lambda \norm{\bm{\Lambda} \mT \vz - \bm{\Lambda} \mcal{P}^{-1}(\bm{\Lambda}) \hat{\rvy}_t}_2^2 + \lambda \norm{(\mI - \bm{\Lambda})\mT (\vz - \vu)}_2^2\\
    =& (1- \lambda) \norm{\mT \vz - \mT \hat{\rvx}_t }_2^2 + \lambda \norm{\bm{\Lambda} \mT \vz - \bm{\Lambda} \mcal{P}^{-1}(\bm{\Lambda}) \hat{\rvy}_t}_2^2\\
    =& (1- \lambda) \norm{\bm{\Lambda} \mT \vz - \bm{\Lambda}\mT \hat{\rvx}_t }_2^2 + \lambda \norm{\bm{\Lambda} \mT \vz - \bm{\Lambda} \mcal{P}^{-1}(\bm{\Lambda}) \hat{\rvy}_t}_2^2 + (1- \lambda) \norm{(\mI - \bm{\Lambda}) \mT \vz - (\mI - \bm{\Lambda})\mT \hat{\rvx}_t }_2^2.
\end{align*}
This simplifies the optimization problem in \cref{eq:optimize} to
\begin{align*}
    \min_{\vz} (1- \lambda) \norm{\bm{\Lambda} \mT \vz - \bm{\Lambda}\mT \hat{\rvx}_t }_2^2 + \lambda \norm{\bm{\Lambda} \mT \vz - \bm{\Lambda} \mcal{P}^{-1}(\bm{\Lambda}) \hat{\rvy}_t}_2^2 + (1- \lambda) \norm{(\mI - \bm{\Lambda}) \mT \vz - (\mI - \bm{\Lambda})\mT \hat{\rvx}_t }_2^2,
\end{align*}
which is minimizing a quadratic function of $\vz$. The optimal solution $\vz^*$ is thus in closed form:
\begin{align*}
    \vz^* = \mT^{-1}[(\mI - \bm{\Lambda})\mT \hat{\rvx}_t + (1-\lambda) \bm{\Lambda} \mT \hat{\rvx}_t + \lambda \bm{\Lambda} \mcal{P}^{-1}(\bm{\Lambda})\hat{\rvy}_t].
\end{align*}
According to the definition, $\hat{\rvx}_t' = \vz^*$, whereby the proof is completed.
\end{proof}

\section{Additional experimental details}\label{app:exp}

\subsection{Additional results}

\begin{figure}
    \centering
    \includegraphics[width=.33\linewidth]{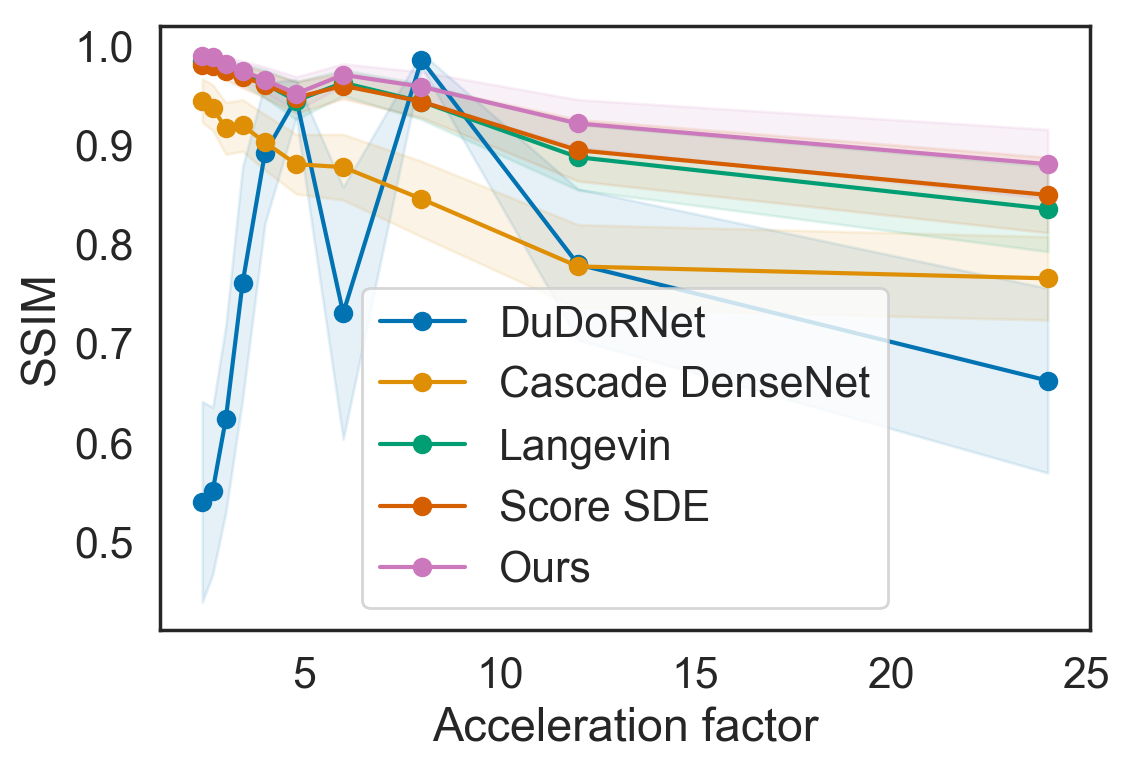}%
    \includegraphics[width=.33\linewidth]{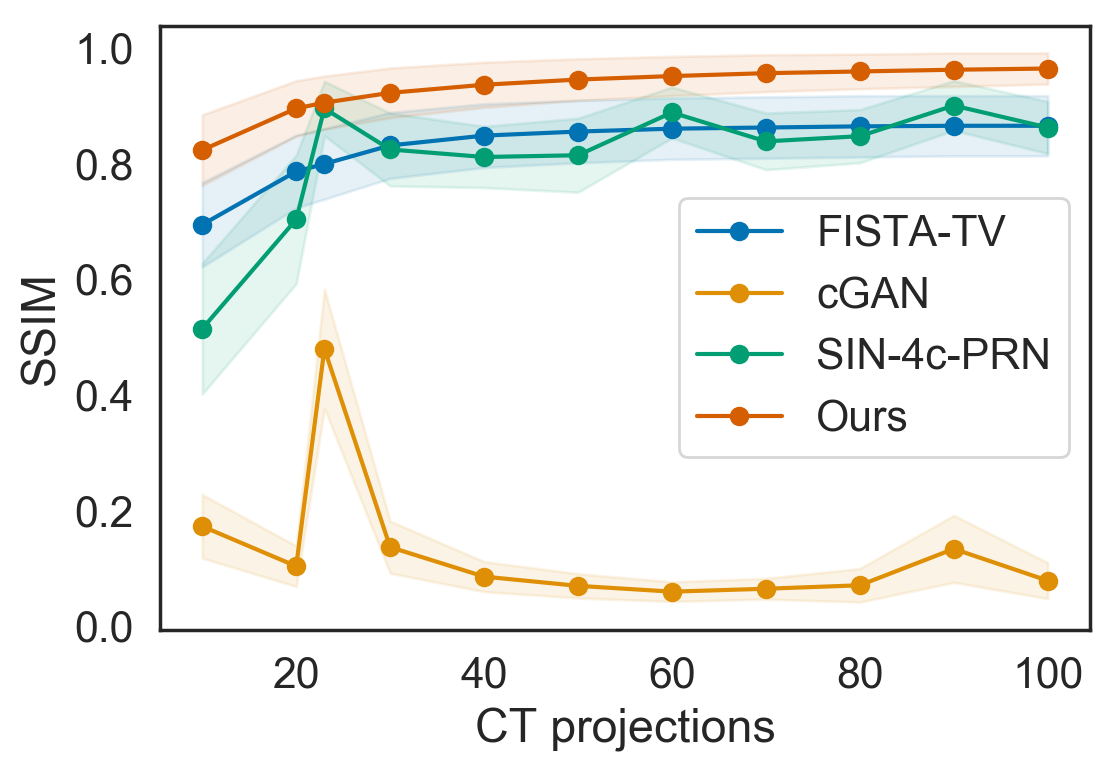}%
    \includegraphics[width=.33\linewidth]{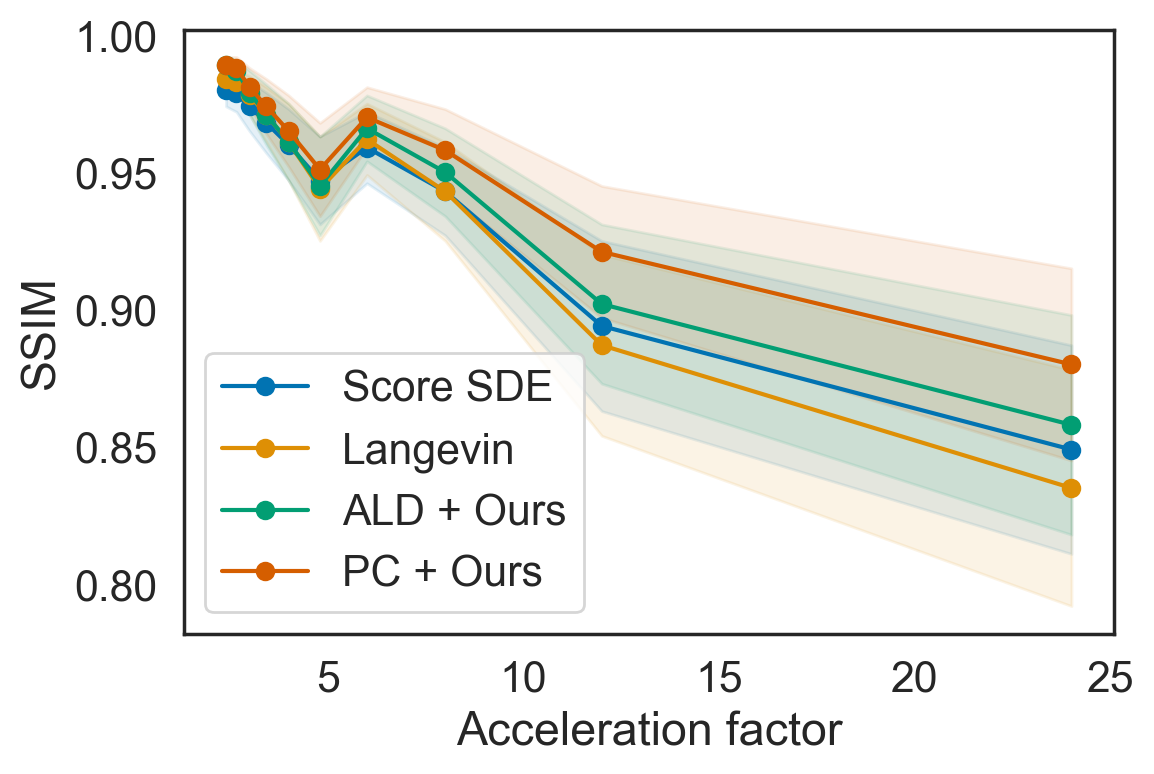}
    \caption{SSIM vs. numbers of measurements. Shaded areas represent standard deviation. \figleft~MRI on BraTS. \figcenter~CT on LIDC. \figright~Comparing score-based generative models for undersampled MRI reconstruction on BraTS. \label{fig:curves_ssim}}
\end{figure}

\begin{figure}
    \centering
    \begin{subfigure}[b]{0.161\textwidth}
        \includegraphics[width=\linewidth]{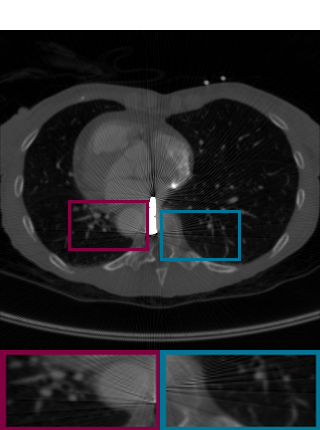}
        \caption{Metal artifact}
    \end{subfigure}
    \begin{subfigure}[b]{0.161\textwidth}
        \includegraphics[width=\linewidth]{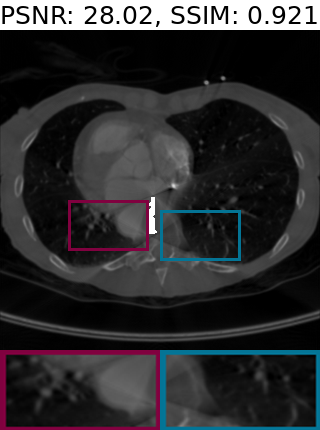}
        \caption{LI}
    \end{subfigure}
    \begin{subfigure}[b]{0.161\textwidth}
        \includegraphics[width=\linewidth]{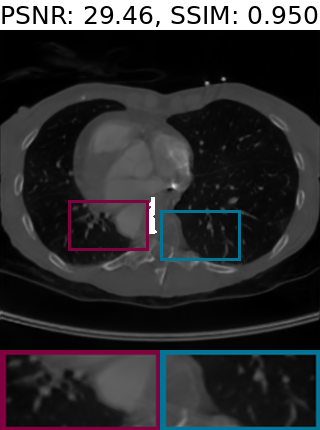}
        \caption{cGANMAR}
    \end{subfigure}
    \begin{subfigure}[b]{0.161\textwidth}
        \includegraphics[width=\linewidth]{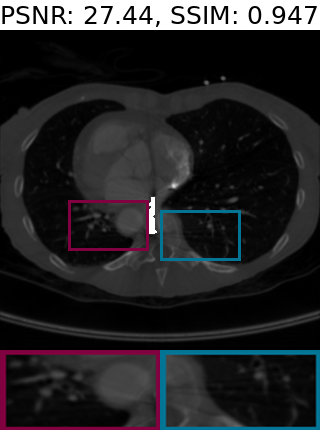}
        \caption{SNMAR}
    \end{subfigure}
    \begin{subfigure}[b]{0.161\textwidth}
        \includegraphics[width=\linewidth]{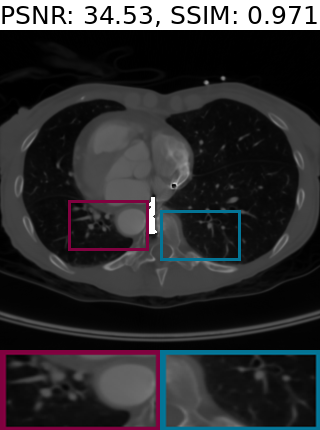}
        \caption{Ours}
    \end{subfigure}
    \begin{subfigure}[b]{0.161\textwidth}
        \includegraphics[width=\linewidth]{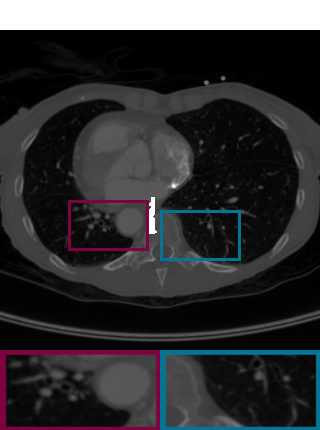}
        \caption{Ground truth}
    \end{subfigure}
    \caption{Examples of metal artifact removal on LIDC. You may zoom in to view more details.}
    \label{fig:mar_demo}
\end{figure}

In \cref{fig:curves_ssim}, we provide SSIM results versus the number of measurements for multiple methods and tasks. In general, the SSIM curves have very similar trends to the PSNR curves in \cref{fig:curves}. We additionally provide a visualization of metal artifact removal results in \cref{fig:mar_demo}.

\subsection{The task of metal artifact removal}

\begin{figure}
    \centering
    \includegraphics[width=0.9\textwidth]{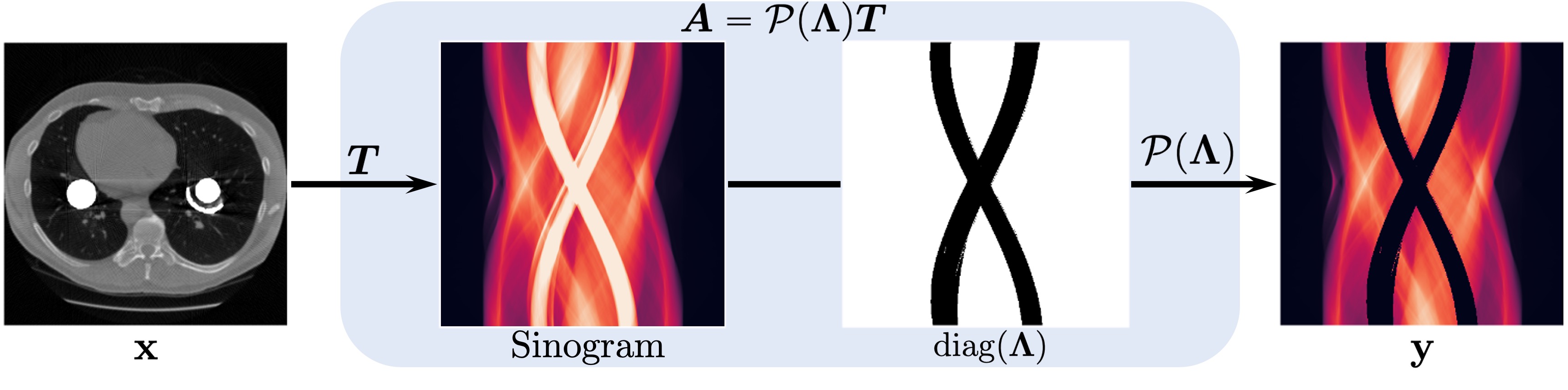}
    \caption{The linear measurement process of metal artifact removal. \label{fig:medical_mar}}
\end{figure}

Metallic implants in an object can cause strong metal artifacts in CT imaging. As shown in \cref{fig:medical_mar}, the source of artifacts come from extremely bright regions in the sinogram, called metal traces. To reduce or ideally remove metal artifacts from a CT image, we remove metal traces from the sinogram and leverage the data prior to complete the sinogram. As a result, metal artifact removal can be viewed as an inverse problem, where the measurement process gives the full sinogram except for the metal trace region, and our goal is to reconstruct the full CT image using this partially known sinogram, which will be artifact-free assuming perfect inpainting of the sinogram.

\subsection{Details of datasets}
\paragraph{CT datasets} 
We conduct experiments of 2D CT image reconstruction on two datasets. First, the Lung Image Database Consortium image collection (LIDC)~\citep{armato2011lidclung, clark2013tcia} consists of diagnostic and lung cancer screening thoracic computed tomography (CT) scans for lung cancer detection and diagnosis, which contains 1018 cases. Second, the Low Dose CT Image and Projection dataset (LDCT)~\citep{clark2013tcia, moen2021ldct} involves CT images of multiple anatomic sites, including 99 head CT scans, 100 chest CT scans, and 100 abdomen CT scans. Note that for the LDCT dataset, we only use the full-dose CT images in our experiments. In CT image processing, we convert the Hounsfield units from dicom files to the attenuation coefficients and set the background pixels to zero. Then, 2D CT images are sliced from 3D CT volumes. The sinograms are simulated from 2D CT images based on parallel-beam geometry with different number of projection angles that are equally distributed across 180 degrees.

\paragraph{MRI dataset} 
The Brain Tumor Segmentation (BraTS) 2021 dataset~\citep{menze2014brats, bakas2017brats2} collected for the image segmentation challenge contains 2000 cases (8000 MRI scans), where each case has four different MR contrasts: native (T1), post-contrast T1-weighted (T1Gd), T2-weighted (T2), and T2 Fluid Attenuated Inversion Recovery (T2-FLAIR). For each 3D MR volume, we extract 2D slices from 3D volumes and simulate k-space data by Fast Fourier Transform. To reconstruct MR images, we follow \citet{knoll2020fastmri,zbontar2018fastMRI} to undersample k-space data with an equispaced Cartesian mask, where the center k-space is fully sampled while the left k-space is under-sampled by equispaced columns.

\subsection{Details of score-based generative models}
We use the NCSN++ model architecture in \cite{song2021scorebased}, and perturb the data with the Variance Exploding (VE) SDE. Our training procedure follows that of \cite{song2021scorebased}. Instead of generating samples according to the numerical SDE solver in \cref{alg:sampling}, we use the Predictor-Corrector (PC) sampler as described in \cite{song2021scorebased} since it generally has better performance for VE SDEs. In PC samplers, the predictor refers to a numerical solver for the reverse-time SDE while the corrector can be any Markov chain Monte Carlo (MCMC) method that only depends on the scores. One such MCMC method considered in this work is Langevin dynamics, whereby we transform any initial sample $\rvx^{(0)}$ to an approximate sample from $p_t(\rvx)$ via the following procedure:
\begin{align}
    \rvx^{(i+1)} \gets \rvx^{(i)} + \epsilon \nabla_\rvx \log p_t(\rvx^{(i)}) + \sqrt{2\epsilon}~\rvz^{(i)}, \quad i=0,1,\cdots,N-1.  
\end{align}
Here $N \in \mbb{N}_{>0}$, $\epsilon > 0$, and $\rvz^{(i)} \sim \mcal{N}(\bm{0}, \bm{I})$. The theory of Langevin dynamics guarantees that in the limit of $N \to \infty$ and $\epsilon \to 0$, $\rvx^{(N)}$ is a sample from $p_t(\rvx)$ under some regularity conditions. Note that Langevin dynamics only requires the knowledge of $\nabla_\rvx \log p_t(\rvx)$, which can be approximated using the time-dependent score model $\vs_{\vtheta^*}(\rvx, t)$. In PC samplers, each predictor step immediately follows multiple consecutive corrector steps, all using the same $\vs_{\vtheta^*}(\rvx, t)$ evaluated at the same $t$. This jointly ensures that our intermediate sample at $t$ is approximately distributed according to $p_t(\rvx)$. As shown in \cite{song2021scorebased}, PC sampling often outperforms numerical solvers for the reverse-time SDE, especially when the forward SDE in \cref{eq:sde} is a VE SDE. In order to use PC samplers for inverse problem solving, our modification is similar to the change made in \cref{alg:inverse_problem} for \cref{alg:sampling}. Specifically, we run line 4 \& 5 in \cref{alg:inverse_problem} before every corrector or predictor step. 

When comparing our approach to previous methods with score-based generative models, we use the same score model to isolate the confounding factors in model training and architecture design. Moreover, we make sure the total cost of sampling is comparable across different methods. For the ALD sampler used in \citet{jalal2021robust}, we use 700 noise scales with 3 steps of Langevin dynamics per noise scale, resulting in a total of $700\times 3 = 2100$ steps that require score function evaluation. For the PC sampler, we use 1000 noise scales and 1 step of Langevin dynamics per noise scale, totalling $1000 + 1000 = 2000$ steps of score model evaluation.

For PC samplers, the step size $\epsilon$ in Langevin dynamics is determined by a signal-to-noise ratio $\eta$. For all methods, we tune $\eta$ and $\lambda$ in \cref{eq:optimize} with 100 steps of Bayesian optimization on a validation dataset, and report the results on the test dataset with the optimal parameters. We use the \href{https://ax.dev/}{\texttt{ax-platform}} toolkit for Bayesian optimization. The optimal parameters in our experiments are given by
\begin{itemize}
    \item Sparse-view CT on LIDC $320\times 320$: $\eta=0.246$, $\lambda=0.841$.
    \item Metal artifact removal on LIDC $320\times 320$: $\eta=0.209$, $\lambda=0.227$.
    \item Sparse-view CT on LDCT $512\times 512$: $\eta=0.4$, $\lambda=0.72$.
    \item Accelerated MRI on BraTS $240\times 240$: $\eta=0.577$, $\lambda=0.982$.
\end{itemize}

\subsection{Training details of baseline models}

\subsubsection{Baseline models for sparse-view CT reconstruction}
\paragraph{FBP} 
Filtered back projection (FBP) is a standard way for CT image reconstruction, which simply put the projections (sinogram) back to the image space based on the corresponding projection angles and geometry to get an approximated estimation of the unknown image. Usually, a high-pass filter, ramp filter is used to eliminate the blurring during this process. In our experiments, we conduct FBP on sparse-view sinograms using the torch radon toolbox~\citep{torch_radon}.

\paragraph{FISTA-TV}
FISTA-TV is a fast iterative shrinkage-thresholding algorithm (FISTA) for solving linear inverse problems in image processing~\citep{beck2009fista}. It adopts a total variation (TV) term as the regularization in the optimization procedure. Each optimization iteration involves a matrix-vector multiplication followed by a shrinkage-threshold step. In experiments, FISTA is implemented using the tomobar toolbox~\citep{tomobar} with the regularization using the CCPi regularisation toolkit~\citep{kazantsev2019ccpi}. We run 300 iterations for reconstructing each CT image with regularization parameter 0.001. Considering the nature of iterative reconstruction in FISTA, it is quite natural to generalize this method to different number of projections for reconstructing CT images. In experiments of generalizing to different number of measurements, FISTA method takes as input the sinogram with different numbers of projections and the corresponding angles for these input projections for the iterative procedure.

\paragraph{cGAN}
Conventional iterative CT reconstruction algorithms like FISTA are typically slow due to their iterative nature. \citet{ghani2018cgan} proposed to cast sparse-view CT reconstruction as a sinogram inpainting problem. Specifically, it used a conditional generative adversarial network (cGAN) to first complete the sinogram data prior to reconstructing CT images, thereby avoiding the costly iterative tomographic processing. However, the imperfect sinogram inpainting may further cause image artifacts. Specifically, cGAN model takes zero-padded sparse-view sinogram with 23 projections as input and generates the completed full-angle sinogram with 180 projections. The cGAN model was implemented using PyTorch~\citep{NEURIPS2019_pytorch} and trained using a batchsize of 64 and learning rate of 0.0001 with 50 epochs in total. In experiments of generalizing to different number of measurements, we deployed the trained cGAN model by zero-padding sparse-view sinogram with different numbers of projections to full-view sinogram as the input. After obtaining the output inpainted sinogram, we replace the corresponding projections in the output based on the ground truth projections in the input. Finally, the images were reconstructed from the overlayed sinogram. Note that we trained the model using 23 projections and tested it on other projection settings to evaluate the generalization. 

\paragraph{SIN-4c-PRN}
To further reduce the artifacts in both sinogram and image space, SIN-4c-PRN~\citep{wei2020twostep} proposed a two-step sparse-view CT reconstruction model. It involves a sinogram inpainting network (SIN) to generate super-resolved sinograms with different number of projections, and then a post-processing refining network (PRN) to further remove image artifacts. Both networks are connected through a filtered back-projection operation (FBP). Specifically, SIN model takes 23-view sinogram as input to fistly upsample to full-view sinogram and then generate sinograms through network for 23, 45, 90, 180 projections respectively. FBP transforms these generated sinograms to image space, which was then concatenated and feed into PRN model for refinement. The framework was implemented using PyTorch~\citep{NEURIPS2019_pytorch} while FBP operation was implemented using . SIN model was trained using a batchsize of 20 and learning rate of 0.0001, while PRN model was trained using a batchsize of 15 and learning rate of 0.0001. Considering that LIDC dataset is much larger than LDCT dataset, the SIN-4c-PRN model was trained for 30 epochs on LIDC dataset and 50 epochs on LDCT dataset. To deploy the trained SIN model to different numbers of measurements, the sinograms with various number of projections are taken as the input for SIN model to generate multi-view sinograms, which were also overlayed with corresponding ground truth projections in inputs. The generated multi-view sinograms are then used for PRN model inference. Since SIN-4c-PRN model involves the dual-domain learning in both sinogram and image spaces to remove artifacts, and generates multi-scale sinograms during sinogram inpainting, it shows a better generalization to different numbers of measurements compared with cGAN model as shown in Figure~\ref{fig:curves} and Figure~\ref{fig:curves_ssim}.

\paragraph{Neumann}
Meanwhile, in another parallel direction, researchers proposed to learn the regularizer used in optimization from training data, outperforming traditional regularizers. Specifically, \citet{gilton2019neumann} presented an end-to-end, data-driven method for learning a nonlinear regularizer for solving inverse problems inspired by the Neumann series, called Neumann network. Neumann network was implemented using PyTorch~\citep{NEURIPS2019_pytorch}. Due to GPU memory constraints, the model training used the batchsize of 5 on LIDC dataset and the batchsize of 2 on LDCT dataset. The initial learning rate was 0.00001 with an exponential learning rate decay. The network was trained with 15 training epochs on both datasets.

\subsubsection{Baseline models for undersampled MRI reconstruction}

\paragraph{DuDoRNet}
\citet{zhou2020dudornet} proposed a dual domain recurrent network (DuDoRNet) to simultaneously recover k-space data and images for MRI reconstruction, in order to address aliasing artifacts in both frequency and image domains. The original model in~\citet{zhou2020dudornet} also embedded a deep T1 prior to make use of fully-sampled short protocol (T1) as complementary information. For a fair comparison with other supervised learning approaches, in our experiments, we do not include this additional information but train the DuDoRNet model without T1 prior. 
The DuDoRNet was trained using a batchsize of 6 and a learning rate of 0.0005 with 5 training epochs. In experiments of generalizing to different number of measurements, we trained the model with an acceleration factor of 8 and deployed the trained model to other acceleration factors during testing. Specifically, for inference, we use different Cartesian masking function corresponding to different acceleration factors or down-sampling ratios to sub-sample the k-space data for the network input with the corresponding initial reconstructed image with zero-padding k-space.

\paragraph{Cascade DenseNet}
To reconstruct de-aliased MR images from under-sampled k-space data, \citet{NEURIPS2019cascadenet} proposed a cascaded dilated dense network (CDDN) for MRI reconstruction, based on stacked dense blocks with residual connections while using the zero-filled MR image as inputs. Specifically, they used a two-step data consistency layer for k-space correction, and replaced corresponding phase-coding lines of the generated image with the original sampled k-space data after each block. In experiments, we trained the model using a batchsize of 8 and a learning rate of 0.0001, with 5 epochs on BraTS dataset. In experiments of generalizing to different number of measurements, we trained the model with an acceleration factor of 8 and deployed the trained model to other acceleration factors during testing. Similarly, different masking functions corresponding to different acceleration factors were used to sub-sample k-space data to get network inputs.
From results, we observe that Cascaded DenseNet generalizes better to more measurements than DuDoRNet as shown in Figure~\ref{fig:curves} and Figure~\ref{fig:curves_ssim}.

\subsubsection{Baseline models for metal artifact removal}
\paragraph{LI}
One straightforward way for reducing metal artifacts is to complete or inpaint the metal-affected missing regions in sinogram directly through linear interpolation~\citep{kalender1987reduction}. This method does not need any network training. However, the imperfect completion of sinogram may introduce secondary artifacts to the reconstructed image. In our experiments setting, to fit for the practical applications in real world, we assume the ground truth metal trace and mask information are unknown, which can only be estimated by a rough thresholding in artifacts-affected images. We use the estimated metal mask and metal trace for linear interpolation baseline.

\paragraph{cGANMAR}
\citet{wang2018conditional} proposed a conditional generative adversarial network (cGAN)-based approach for metal artifacts reduction (MAR) in CT. Specifically, cGANMAR network learns the mapping directly from the artifacts-affected CTs to artifacts-free CTs through refinement in image space. The cGANMAR model was implemented using PyTorch~\citep{NEURIPS2019_pytorch} and was trained with the batchsize of 64 and the learning rate of 0.0001. The network was trained with 400 epochs.

\paragraph{SNMAR}
~\citet{yu2020deep} proposed a sinogram completion neural network (SinoNet) to recover the metal-affected projections. Especially, it leveraged the learning in both sinogram domain and image domain by using a prior network to generate a good prior image to guide sinogram learning. Note that in original setting, SNMAR required linear interpolated sinogram and CT as inputs and used ground truth metal trace and mask information to generated them. But in our method, we assume the ground truth metal trace and mask information are unknown according to practical scenario and estimate it by a rough thresholding, which will introduce estimation errors. In SNMAR experiments, we still follow the original setting to guarantee the best performance of this baseline method for a strong comparison. We trained the SNMAR using the batchsize of 64 and the learning rate of 0.0001, with a total of 100 training epochs.
\end{document}